\theoremstyle{plain}
\newtheorem{thm}{\protect\theoremname}
\theoremstyle{plain}
\newtheorem{lem}[thm]{\protect\lemmaname}
\theoremstyle{plain}
\newtheorem{cor}[thm]{\protect\corollaryname}
\renewcommand{\textcolor}[2]{#2}
\date{\today}
\providecommand{\corollaryname}{Corollary}
\providecommand{\lemmaname}{Lemma}
\providecommand{\theoremname}{Theorem}
\begin{document}
\title{Super-Heisenberg scaling in Hamiltonian parameter estimation in the
long-range Kitaev chain}
\author{Jing Yang}
\email{jyang75@ur.rochester.edu}

\affiliation{Department of Physics and Astronomy, University of Rochester, Rochester,
New York 14627, USA}
\address{Department of Physics and Materials Science, University of Luxembourg,
L-1511 Luxembourg, Luxembourg}
\author{Shengshi Pang}
\email{pangshsh@mail.sysu.edu.cn}

\affiliation{School of Physics, Sun Yat-Sen University, Guangzhou, Guangdong 510275,
China}

\author{Adolfo del Campo}
\email{adolfo.delcampo@uni.lu}

\address{Department of Physics and Materials Science, University of Luxembourg,
L-1511 Luxembourg, Luxembourg}
\address{Donostia International Physics Center, E-20018 San Sebasti\'an, Spain}

\author{Andrew N. Jordan}
\email{jordan@pas.rochester.edu}

\affiliation{Institute for Quantum Studies, Chapman University, 1 University Drive,
Orange, CA 92866, USA}
\affiliation{Department of Physics and Astronomy, University of Rochester, Rochester,
New York 14627, USA}

\begin{abstract}
In quantum metrology, nonlinear many-body interactions can enhance
the precision of Hamiltonian parameter estimation to surpass the Heisenberg
scaling. Here, we consider  the estimation of the interaction strength
in linear systems with long-range interactions and using the Kitaev
chains as a case study, we establish a transition from the Heisenberg
to super-Heisenberg scaling in the quantum Fisher information by varying
the interaction range. We further show that quantum control can improve
the prefactor of the quantum Fisher information. Our results explore
the advantage of optimal quantum control and long-range interactions
in many-body quantum metrology. 
\end{abstract}
\maketitle

\section{Introduction}

\textcolor{blue}{Quantum metrology is a paradigmatic example of an
emergent technology in which quantum resources can provide an advantage
with no classical counterpart~\citep{helstrom1968theminimum,helstrom1976quantum2,holevo2011probabilistic,braunstein_statistical_1994}.
The general scheme for estimating the parameters in a Hamiltonian
system is depicted in Fig.~\ref{fig:protocol}. Quantum mechanics
provides two key ingredients improving the precision: (a) the coherence
in state $\rho_{\theta}$ of the probe, which is controlled by the
probe time $T$, and (b) entanglement, when $N$ probes are allowed
in a single round, which can be introduced in the initial state or
generated via many-body interactions during the sensing process. According
to the quantum Cramer-Rao bound, the uncertainty $\delta\theta$ in
the estimation of the parameter $\theta$ in Fig.~\ref{fig:protocol}
is governed by the quantum Fisher information $I(\theta)$ (QFI) as
$\delta\theta\geq1/\sqrt{\nu I(\theta)}$, where $\nu$ is the number
of repetitions of the process. The QFI plays a fundamental role in
the geometry of the space of quantum states and has manifold applications,
which include witnessing a quantum phase transition~\citep{quan2006decayof,zanardi2007informationtheoretic,camposvenuti2007quantum,gu2010fidelity},
critical sensing~\citep{rams2018atthe,chu2021dynamic,garbe2020critical,mishra2020driving}
and detecting multi-partite entanglement~\citep{toth2012multipartite,hyllus2012fisherinformation,pezz`e2017multipartite}.}
\begin{figure}[t]
\begin{centering}
\includegraphics[scale=0.37]{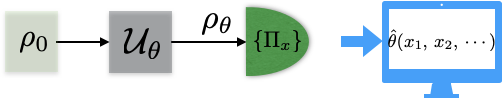}
\par\end{centering}
\caption{\label{fig:protocol}A typical round in quantum metrology consisting
of four steps (i) Preparing an initial quantum state $\rho_{0}$ (ii)
Evolving the initial state $\rho_{0}$ under a parameter $\theta$-dependent
unitary quantum channels $\mathcal{U}_{\theta}$ to obtain a parameter-dependent
state $\rho_{\theta}$ (iii) Performing a quantum measurement described
by the positive operator-valued measure operators $\{\Pi_{x}\}$ on
the state $\rho_{\theta}$ to get data $x_{n}$ (iv) Steps (i)-(iii)
can be run multiple times in parallel or a sequential scheme~\citep{giovannetti2006quantum,boixo2007generalized},
which generates a large number of measurement data $\{x_{1},\,x_{2},\cdots\}$.
Processing the data with the maximum likelihood estimator $\hat{\theta}(x_{1},\,x_{2},\,\cdots)$
saturates the classical-Cram\'er-Rao bound.}
\end{figure}

Recently, optimal control has been shown to offer a new arena for
enhancing quantum parameter estimation~\citep{yuan2015optimal,pang2017optimal,yang2017quantum}.
The interplay between quantum control theory and quantum many-body
systems is yet to be undertaken and it is crucial to understand quantum
parameter estimation of coupling constants in realistic systems with
long-range interactions.\textcolor{blue}{{} In Hamiltonian parameter
estimation of noninteracting spin systems, the maximum possible QFI
scales linearly with the number of probes, for an uncorrelated initial
state~\citep{giovannetti2006quantum}. The scaling becomes quadratic
if the probes are initially prepared in the GHZ state with maximum
entanglement, known as the Heisenberg scaling (HS)~\citep{giovannetti2006quantum}.
This naturally motivates the idea of surpassing the HS, reaching the
so-called super-HS scaling, by introducing nonlinear interactions
in the sensing Hamiltonian ~\citep{beau2017nonlinear,boixo2007generalized,roy2008exponentially}.
These works have led to the intuitive belief that surpassing the HS
requires nonlinear interactions. }

\textcolor{blue}{In this paper, we explore Hamiltonain parameter estimation
in linear systems with long-range interactions, using a case study
of the generalization of the Long-Range Kitaev (LRK) chain~\citep{vodola2014kitaevchains,vodola2015longrange,viyuela2016topological}
to allow for general decay laws of the long-range interactions. By
focusing on the estimation of the long-range superconducting strength,
we establish that super-HS can be achieved in the case of slowly decaying
}\textit{\textcolor{blue}{linear}}\textcolor{blue}{{} long-range
interactions. Indeed, we observe a transition from HS to super-HS
for a specific value of the exponent governing the decay law of the
interactions. In all cases, quantum control may improve the prefactor
of the scaling of the QFI as a function of the the number of lattice
sites.}

\section{Hamiltonian estimation of\textit{ }the LRK model.}

We consider parameter estimation with a general time-dependent Hamiltonian
$H_{\theta}(t)$, where $\theta$ is the estimation parameter and
the parametric dependence is general, i.e., not necessarily multiplicative.
The effective generator for the parameter estimation is defined as
$\ket{\psi_{\theta}}=\text{e}^{-\text{i}G_{\theta}}\ket{\psi_{0}}$~\citep{boixo2007generalized},
where $\ket{\psi_{0}}$ is the initial state and $\ket{\psi_{\theta}}$
is the effective parameter-dependent state which gives the same QFI
as the true physical state~\footnote{Note that unless the Hamiltonian is multiplicative, $\ket{\psi_{\theta}}$
in general may be not the same as the true physical state $\mathcal{U}_{\theta}(T)\ket{\psi_{0}}$.
However, $\ket{\psi_{\theta}}$ is able to give the same QFI as the
true physical state, see Ref.~\citep{pang2017optimal} for further
justifications.}. For a general Hamiltonian it is given by~\citep{pang2017optimal,pang2014quantum,boixo2007generalized}
\[
G_{\theta}=\int_{0}^{T}\mathcal{U}_{\theta}^{\dagger}(\tau)\partial_{\theta}H_{\theta}(\tau)\mathcal{U}_{\theta}(\tau)d\tau,
\]
 where $\mathcal{U}_{\theta}(\tau)$ is the evolution operator. Once
the generator is obtained, the quantum Fisher information is given
by $I(\theta)=4\text{Var}[G_{\theta}]|_{\ket{\psi_{0}}}.$ Maximization
over all the possible initial states gives 
\begin{equation}
I(\theta)=[\vartheta_{\max}(T)-\vartheta_{\min}(T)]^{2},\label{eq:I}
\end{equation}
where $\ket{\vartheta_{\min}(T)}$ and $\ket{\vartheta_{\max}(T)}$
as the eigenvectors that corresponding to the minimum and maximum
eigenvalues of $G_{\theta}$ and the corresponding initial state is
prepared in an equal superposition between $\ket{\vartheta_{\max}(T)}$
and $\ket{\vartheta_{\min}(T)}$. When coherent optimal controls are
possible, one can further optimize the QFI over the unitary dynamics
appear in the generator $G_{\theta}$. We denote the eigenstates of
$\partial_{\theta}H_{\theta}(t)$ at the instant time $t$ as $\ket{\chi_{n}(t)}$.
It turns out the optimal unitary dynamics is the one which steers
the state always towards $\ket{\chi_{n}(t)}$, if one starts with
$\ket{\chi_{0}(0)}$~\citep{pang2017optimal}. With this intuition,
it is easily found that the total Hamiltonian after including the
control is 
\begin{equation}
H_{\text{tot}}(t)=\text{i}\hbar\partial_{t}\mathcal{U}_{\text{c}\theta}(t)\mathcal{U}_{\text{c}\theta}^{-1}(t),
\end{equation}
where 
\begin{equation}
\mathcal{U}_{\text{c}\theta}(t)=\sum_{n}\ket{\chi_{n}(t)}\bra{\chi_{n}(0)}
\end{equation}
 is a unitary operator formed by the eigenvectors of $\partial_{\theta}H_{\theta}(t)$.
Therefore, the optimal control Hamiltonian is~\citep{cabedo-olaya2020shortcuttoadiabaticitylike,delcampo2013shortcuts}
\begin{equation}
H_{\text{c}}(t)=H_{\text{tot}}(t)-H_{\theta}(t).
\end{equation}
When optimal control is applied, the generator is $G_{\theta}=\sum_{n}\ket{\chi_{n}(0)}\bra{\chi_{n}(0)}\int_{0}^{T}\chi_{n}(\tau)d\tau$.
Thus the upper bound of the QFI after optimization over the initial
states and unitary dynamics is 
\begin{equation}
I_{0}(\theta)=\left(\int_{0}^{T}[\chi_{\max}(\tau)-\chi_{\min}(\tau)]d\tau\right)^{2},\label{eq:I0}
\end{equation}
where $\chi_{\max}(t)$ and $\chi_{\min}(t)$ are the maximum and
minimum eigenvalues of $\partial_{\theta}H_{\theta}(t)$. The optimal
initial state is the equal superposition between $\ket{\chi_{\min}(0)}$
and $\ket{\chi_{\min}(0)}$. For time-independent Hamiltonians, $I_{0}(\theta)$
is simply proportional to the square of the difference of the maximum
and minimum eigenvalues of $\partial_{\theta}H_{\theta}$, with the
prefactor $4T^{2}$.

Now we consider $H_{\theta}$ be the LRK Hamiltonian~\citep{vodola2014kitaevchains}
\begin{align}
H_{\theta} & =-\frac{J}{2}\sum_{j=1}^{N}(a_{j}^{\dagger}a_{j+1}+a_{j+1}^{\dagger}a_{j})-\mu\sum_{j=1}^{N}(a_{j}^{\dagger}a_{j}-\frac{1}{2})\nonumber \\
 & +\frac{\Delta}{2}\sum_{j=1}^{N-1}\sum_{l=1}^{N-j}\kappa_{\alpha,\,l}(a_{j}a_{j+l}+a_{j+l}^{\dagger}a_{j}^{\dagger}),\label{eq:LRK}
\end{align}
where consider a unit lattice spacing, $J$ represents the tunneling
rate between nearest neighbors, $\mu$ is the chemical potential,
$\Delta$ represents the strength of the $p$-wave pairing, $N$ is
the number of Fermionic lattice sites, and $\kappa_{l,\,\alpha}$
satisfies the symmetry property: $\kappa{}_{l,\,\alpha}=\kappa_{N-l,\,\alpha}$
for $1\le l\le N/2$. Here, $\alpha\ge0$ characterizes the decay
property of the long-range interaction. Without loss of generality,
we choose the normalization condition $\kappa_{1,\,\alpha}=1$. The
Kitaev chain has recently attracted broad attention as it supports
noise-resilient Majorana zero modes at its two ends for open boundary
conditions~\citep{kitaev2001unpaired,alicea2012newdirections}. Recent
works~\citep{vodola2014kitaevchains,vodola2015longrange,viyuela2016topological}
have generalized the original model to the LRK chain, which contains
long-range superconducting $p$-wave pairing, i.e., the last term
on the r.h.s of Eq.~\eqref{eq:LRK}. \textcolor{blue}{Note that by
contrast to the power-law decay law in Ref.~\citep{vodola2014kitaevchains},
we consider a general decay law $\kappa_{l,\,\alpha}$ which only
requires the existence of a finite non-negative integer $Q\ge0$ such
that} (i) $\kappa_{x,\,\alpha}$ and its derivatives with respect
to $x$ up to $2Q$ order are bounded on $[1,\,\infty]$ and (ii)
$\int_{1}^{\infty}\kappa_{x,\,\alpha}^{(2Q+1)}dx$ is finite, where
the superscript $(q)$ denotes the $q$-th derivative with respect
to $x$. Specifically, we consider power-law interactions with $\kappa_{l,\,\alpha}=l^{-\alpha}$
as in the original proposal of the LRK model~\citep{vodola2014kitaevchains},
and $\kappa_{l,\,\alpha}=(1+\ln l)^{-\alpha}$ $(\alpha\ge0)$~\footnote{Note that they represent $\kappa_{l,\,\alpha}=(\lambda l)^{-\alpha}$
with $\lambda>0$ and $\alpha\ge0$ and $\kappa_{l,\,\alpha}=[\ln(\lambda l)]^{-\alpha}$
with $\lambda>1$ and $\alpha\ge0$. Without loss of generality, we
shall set $\lambda=1$ in the former case and $\lambda=e$ in the
latter case.}. Assuming the anti-periodic boundary condition $a_{j}=-a_{j+N}$,
the LRK Hamiltonian~(\ref{eq:LRK}) can be diagonalized via the Bogoliubov
transformation~\citep{pezz`e2017multipartite}, yielding 
\begin{equation}
H_{\theta}=\sum_{k}\epsilon_{\theta}(k)\eta_{\theta}^{\dagger}(k)\eta_{\theta}(k),
\end{equation}
 where $k=\frac{1}{2}\frac{2\pi}{N},\,\frac{3}{2}\frac{2\pi}{N}\cdots\frac{2\pi}{N}(N-\frac{1}{2})$,
\begin{equation}
\epsilon_{\theta}(k)\equiv\sqrt{[\Delta f_{\alpha}(k)/2]^{2}+(J\cos k+\mu)^{2}},\label{eq:eps-theta}
\end{equation}
and 
\begin{equation}
f_{\alpha}(k)\equiv2\sum_{l=1}^{N/2-1}\kappa_{l,\,\alpha}\sin(kl)+\kappa_{N/2,\,\alpha}.
\end{equation}
The factor of $2$ in front of $f_{\alpha}(k)$ accounts for the symmetry
property of $\kappa_{l,\,\alpha}$. The generator $G_{\theta}$ can
be diagonalized via the Bogoliubov transformation as (see Appendix~\ref{sec:The-diagonaization}
for details)
\begin{equation}
G_{\theta}=\sum_{k}\mathscr{E}_{\theta}(k)\psi_{\theta}^{\dagger}(k)\psi_{\theta}(k),\label{eq:Generator-diagonal}
\end{equation}
where the Fermionic operators $\psi_{\theta}^{\dagger}(k)$ and $\psi_{\theta}(k)$
are defined in Eq.~\eqref{eq:psi-k-def} and the spectrum is 
\begin{align}
\mathscr{E}_{\theta}(k) & \equiv\{T^{2}\left[\partial_{\theta}\epsilon_{\theta}(k)\right]^{2}+\frac{1}{4}\xi_{\theta}^{2}(k)\sin^{2}[2\epsilon_{\theta}(k)T],\nonumber \\
 & +\frac{1}{4}\xi_{\theta}^{2}(k)\left(1-\cos[2\epsilon_{\theta}(k)T]\right)^{2}\}^{1/2},\label{eq:calEk}
\end{align}
with
\begin{align}
\xi_{\theta}(k) & \equiv\partial_{\theta}\cos\phi_{\theta}(k)/\sin\phi_{\theta}(k),\label{eq:xi-k-def}\\
\sin\phi_{\theta}(k) & =-\Delta f_{\alpha}(k)/[2\epsilon_{\theta}(k)],\label{eq:sin-thi-def}\\
\cos\phi_{\theta}(k) & =-(J\cos k+\mu)/\epsilon_{\theta}(k).\label{eq:cos-phi-def}
\end{align}

\section{Optimal control and optimal initial state}

We next determine the optimal controls and optimal initial states
for parameter estimation, by using the spectral properties of $\partial_{\theta}H_{\theta}(t)$,
for the different choices of the Hamiltonian parameter $\theta$.
According to Eq.~\eqref{eq:LRK-p-space} in Appendix~\ref{sec:The-diagonaization},
the representation of the LRK Hamiltonian in the momentum space, it
is readily calculated that 
\begin{align}
\partial_{J}H & =-\sum_{k}a^{\dagger}(k)a(k)\cos k,\\
\partial_{\mu}H & =-\sum_{k}a^{\dagger}(k)a(k).
\end{align}
We note that $\partial_{J}H$ and $\partial_{\mu}H$ commute with
each other. Thus, according to the preceding section, the optimal
control for estimating $J$ and $\mu$ is to cancel the long-range
superconducting terms. The maximum and minimum eigenstates for $\partial_{J}H$
are 
\begin{equation}
\ket{\mathbbm{1/2}}=\prod_{k,\,\cos k\le0}a^{\dagger}(k)\ket{0},
\end{equation}
 and 
\begin{equation}
\ket{\mathbbm{-1/2}}=\prod_{k,\,\cos k>0}a^{\dagger}(k)\ket{0},
\end{equation}
respectively. We adopt this notation since in momentum space both
the maximum and minimum eigenstates are half-occupied. The optimal
initial state for estimating $J$ under optimal control is 
\begin{equation}
\ket{\psi_{0}}=\frac{1}{\sqrt{2}}(\ket{\mathbbm{1/2}}+\ket{\mathbbm{-1/2}}).
\end{equation}
Similarly, for $\partial_{\mu}H$, the maximum eigenstate is $\ket{0}$,
the vacuum state annihilated by $a(k)$ or $a_{j}$, and the minimum
eigenstate is the fully occupied state in the momentum space, which
we denote as 
\begin{equation}
\ket{\mathbbm{1}}=\prod_{k}a^{\dagger}(k)\ket{0}.
\end{equation}
Therefore, the optimal initial state for estimating $\mu$ under optimal
control is 
\begin{equation}
\ket{\psi_{0}}=\frac{1}{\sqrt{2}}(\ket{0}+\ket{\mathbbm{1}}).
\end{equation}

The optimal control for the estimation of $\Delta$ is to cancel all
the local interaction terms, including the tunneling and kinetic terms.
We note that the diagonalization of $\partial_{\Delta}H$ is a special
case for the diagonalization of the LRK Hamiltonian, corresponding
to $J=\mu=0$ and $\Delta=1$. With this observation one finds 
\[
\partial_{\Delta}H=\frac{1}{2}\sum_{k}\big|f_{\alpha}(k)\big|b^{\dagger}(k)b(k),
\]
where $b(k)=u_{k}a(k)+v_{k}a^{\dagger}(-k)$, $u_{k}=1/2$, $v_{k}=-\text{i}/\sqrt{2}$
if $f_{\alpha}(k)\ge0$ and $v_{k}=\text{i}/\sqrt{2}$ if $f_{\alpha}(k)<0$.
We note that $v_{k}=-v_{-k}$ because $f_{\alpha}(k)$ is an odd function
of $k$. The minimum eigenstate of $\partial_{\Delta}H$ is the ground
state annihilated by $b(k)$. According to the BCS ansatz, it is 
\begin{equation}
\ket{\text{GS}}=\prod_{k}[u_{k}-v_{k}a^{\dagger}(k)a^{\dagger}(k)]\ket{0}.
\end{equation}
The maximum eigenvalue of $\partial_{\Delta}H$ corresponds to the
fully occupied state in the picture of $b(k)$ and $b^{\dagger}(k)$,
which we denote by $\ket{\text{FO}}$ and can be written as 
\begin{equation}
\ket{\text{FO}}=\prod_{k}[u_{k}^{*}-v_{k}^{*}a(k)a(-k)]\ket{\mathbbm{1}}.\label{eq:FE}
\end{equation}
One can explicitly check that $\ket{\text{FO}}$ is normalized and
satisfies $b^{\dagger}(k)\ket{\text{FO}}=0$ for all $k$. We call
Eq.~\eqref{eq:FE} the BCS-like fully occupied states since its construction
is inspired by the BCS-ground state. Thus the optimal initial state
for estimating $\Delta$ is 
\begin{equation}
\ket{\psi_{0}}=(\ket{\text{GS}}+\ket{\text{FO}})/\sqrt{2}.
\end{equation}

\subsection{HS for estimation of $J$ and $\mu$}

The difference between the maximum and minimum eigenvalues of $\partial_{J}H$
in the many-body Hilbert space is $|\cos k|$. Thus the QFI for estimating
$J$ according to Eq.~\eqref{eq:I0} is $I_{0}(J)=(\sum_{k}|\cos k|)^{2}T^{2}$.
In the limit $N\to\infty$, 
\begin{equation}
I_{0}(J)=T^{2}[(N/2\pi)^{2}\int_{0}^{2\pi}dk|\cos k|]{}^{2}=4N^{2}T^{2}/\pi^{2},\label{eq:I0-J}
\end{equation}
where we have replaced $\sum_{k}\to N/2\pi\int dk$ taking the continuum
limit, as the integrand is not singular in the integration region.
In fact, the error introduced does not scale with $N$ according to
the analysis with the Euler-Maclaurin formula in Appendix~\ref{sec:Euler-Maclaurin-formula}.
Similarly, the difference between the maximum and minimum eigenvalues
of $\partial_{\mu}H$ in the many-body Hilbert space is $1$ and therefore
$I_{0}(\mu)=N^{2}T^{2}$. We see that the scaling of the ultimate
QFI for estimating $J$ and $\mu$ is the HS. The plot of $I_{0}(J)$
with the number of lattice sites is shown in Fig.~\ref{fig:The-plots-of-IJ}.
We will show shortly that even in the case of imperfect control or
without control, such scaling is not altered.

\begin{figure}
\begin{centering}
\includegraphics[width=0.7\linewidth]{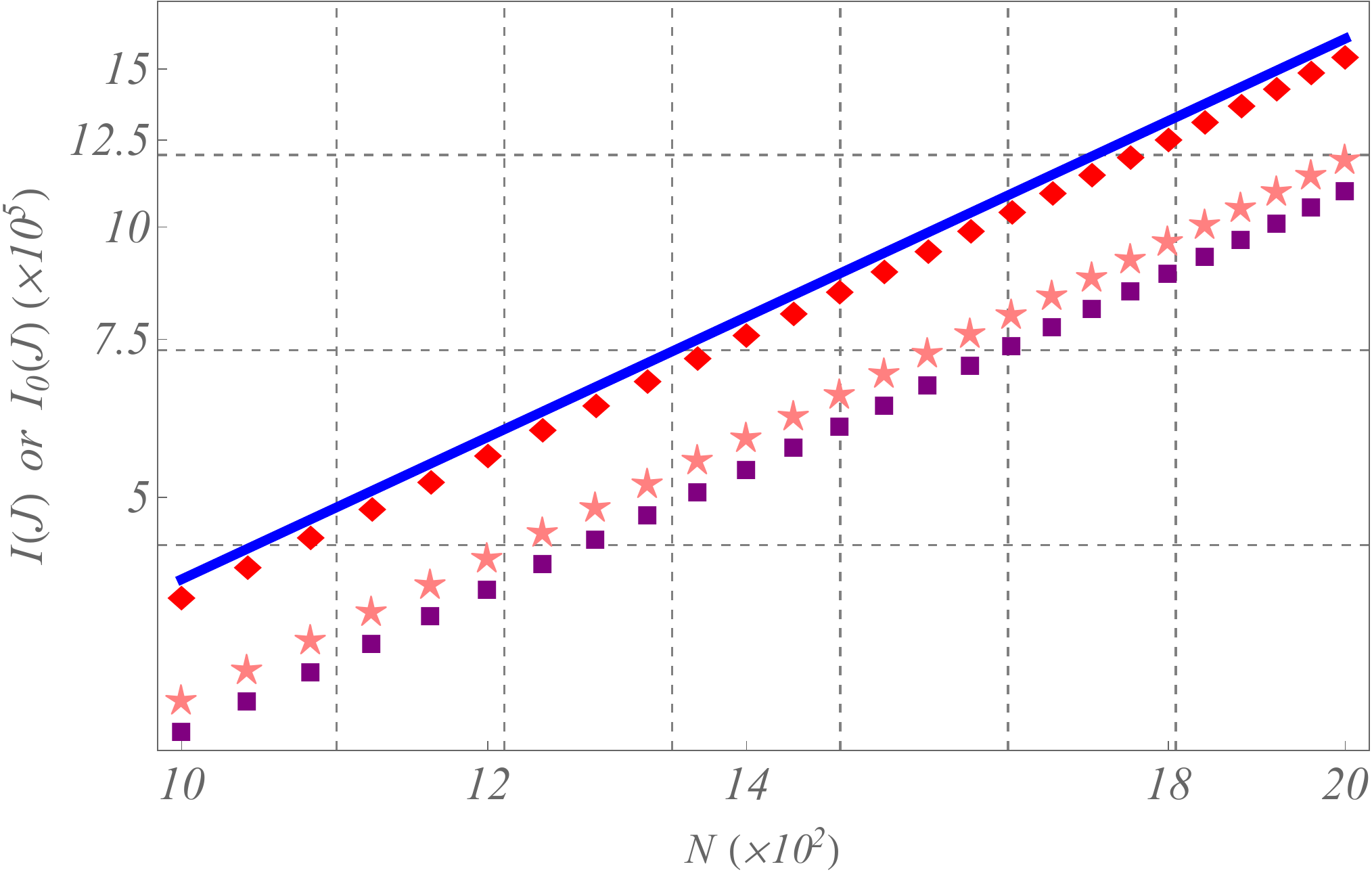} 
\par\end{centering}
\caption{\label{fig:The-plots-of-IJ} Quantum Fisher information $I(J)$ for
$J$ estimation in the case with no control or imperfect control,
as well as for optimal control, as a function of $N$ for $\kappa_{l,\,\alpha}=l^{-\alpha}$.
The blue solid line is the maximum possible QFI plotted according
to the analytical expression Eq.~\eqref{eq:I0-J} when optimal controls
are applied, where the long-range interaction terms are cancelled
(with probe time $T=1$). All the discrete points are numerically
calculated from Eq.~\eqref{eq:I-theta}. The values of the parameters
are (i) red circle dots: $J=\mu=10\Delta$ and $\alpha=0$, (ii) pink
stars are $J=\mu=\Delta$ and $\alpha=0.5$, and (iii) purple squares:
$J=\mu=\Delta$ and $\alpha=0$.}
\end{figure}

\subsection{HS to super-HS transition for estimating $\Delta$}

The maximum and minimum eigenvalues of $\partial_{\Delta}H$ in the
many-body Hilbert space are $\gamma_{\alpha}(N)/2$ and $0$, where
$\gamma_{\alpha}(N)\equiv\sum_{k}|f_{\alpha}(k)|$. Thus, for estimating
$\Delta$, the QFI reads 
\begin{equation}
I_{0}(\Delta)=[\gamma_{\alpha}(N)/2]^{2}T^{2}.\label{eq:I0-Delta}
\end{equation}
Determining the scaling of $I_{0}(\Delta)$ boils down to computing
the scaling of $\gamma_{\alpha}(N)$ at large $N$. Let us first discuss
a simple case, with $\kappa_{l,\,0}=1$. In this case, $f_{0}(k)=\cot(k/2)$.
According to Appendix~\ref{sec:Proof-Scaling-Gamma}, when applying
the Euler-Maclaurin formula, the upper bound of the scaling of the
remainder, which is the difference between $\gamma_{0}(N)$ and the
main integral $N/(2\pi)\int_{\pi/N}^{\pi}\cot(k/2)dk$, is $N$ due
to the singularity of $f_{0}(k)$ around $k=0$. Nevertheless, since
the main integral$N/(2\pi)\int_{\pi/N}^{\pi}\cot(k/2)dk\sim N\ln N$,
which is still much larger than $N$ in the asymptotic limit of larger
$N$, we conclude that the leading order $\gamma_{0}(N)$ is $N$
.

Let us next focus on the case of the power-law decay for the long-range
interaction in the original proposal of LRK~\citep{vodola2014kitaevchains},
i.e., $\kappa_{l,\,\alpha}=l^{-\alpha}$. For $\alpha>1$, $f_{\alpha}(k)$
has no singularity for all the values of the momentum $k$. This is
because $|\sum_{l=1}^{N-1}\sin(kl)/l^{\alpha}|\leq\sum_{l=1}^{N-1}1/l^{\alpha}$
and the latter series is convergent for $\alpha>1$. With the Euler-Maclaurin
formula discussed in Appendix~\ref{sec:Euler-Maclaurin-formula},
$\gamma_{\alpha}(N)$ scales as $N$. Thus the QFI $I_{0}(\Delta)$
obeys the HS for $\alpha>1$. For $0<\alpha\le1$, using the properties
of polylogarithmic function~\citep{olver2010nisthandbook}, one finds
$f_{\alpha}(k)\sim1/k^{1-\alpha}$ (see also Appendix~\ref{sec:Singularity-f-alpha-power}
and Ref.~\citep{vodola2014kitaevchains}). According to Appendix~\ref{sec:Proof-Scaling-Gamma},
the upper bound of the scaling of the remainder in the Euler-Maclaurin
formula is strictly slower than $N$ and the leading order of $\gamma_{\alpha}(N)$
is controlled by the main integral $N/2\pi\int_{\pi/N}^{\pi}|f_{\alpha}(k)|dk.$
We note that $\int_{\pi/N}^{\pi}[|f_{\alpha}(k)|-1/k^{1-\alpha}]dk$
should be constant as $N\to\infty$ since the singularity has been
removed. So $\int_{\pi/N}^{\pi}|f_{\alpha}(k)|dk\sim\int_{\pi/N}^{\pi}1/k^{1-\alpha}dk$
is a constant, which does not scale with $N$ and therefore $\gamma_{\alpha}(N)\sim N$.
We thus find that for $\kappa_{l,\,\alpha}=l^{-\alpha}$, super-HS
scaling only occurs for $\alpha=0$.

\textcolor{blue}{Now, let us explore more general long-range interactions
that satisfy the regularity condition at the beginning. As we have
seen above, the scaling $\gamma_{\alpha}(N)$ crucially depends on
the singularities of $f_{\alpha}(k)$, which is caused by the slow-decaying
long-range interactions. We argue at the end of Appendix~\ref{sec:The-convergence-integration-by-parts}
that $\int_{1/N}^{\Lambda}dkf_{\alpha}(k)\sim N\int_{1}^{N}(\kappa_{x,\,\alpha}/x)\le N\ln N$.
Then according to }Appendix~\ref{sec:Proof-Scaling-Gamma},\textcolor{blue}{{}
we find the leading order scaling of $\gamma_{\alpha}(N)$ is controlled
by $N/2\pi\int_{1/N}^{\Lambda}dkf_{\alpha}(k)\sim N\int_{1}^{N}(\kappa_{x,\,\alpha}/x)dx$.}
We see that the maximum possible scaling $\gamma_{\alpha}(N)$ is
$N\ln N$, where $\kappa_{x,\,\alpha}$ is a constant that does not
depend on $x$. Therefore, according to Eq.~\eqref{eq:I0-Delta},
\begin{equation}
I_{0}(\Delta)\sim N^{2}\left[\int_{1}^{N}(\kappa_{x,\,\alpha}/x)dx\right]^{2}.\label{eq:I0-Delta-scaling}
\end{equation}
and it is bounded by $N^{2}(\ln N)^{2}$ rather than the HS. In particular,
when the long-range interaction decays sufficiently slow, $\int_{1}^{N}(\kappa_{x,\,\alpha}/x)$
can diverge at large $N$ and therefore super-HS occurs for $I_{0}(\Delta)$.
This is the case, e.g., when $\kappa_{x,\,\alpha}=[\ln(ex)]^{-\alpha}=(1+\ln x)^{-\alpha}$
which satisfies the regularity conditions with $Q=1$. So we obtain
$\gamma_{\alpha}(N)\sim N\int_{1}^{N}dx/[x(1+\ln x)^{\alpha}]$. The
integral can be evaluated with the change of variable $s=1+\ln x$,
which leads to 
\begin{equation}
I_{0}(\Delta)\sim\begin{cases}
N^{2}(\ln N)^{2(1-\alpha)} & \alpha\in[0,\,1)\\
N^{2}(\ln\ln N)^{2} & \alpha=1\\
N^{2} & \alpha>1
\end{cases}.\label{eq:scaling0-double-int}
\end{equation}
As a result, super-HS occurs for the very slow decay law dictated
by the power of logarithms when $\alpha\le1$. \textcolor{blue}{As
one can see from Fig.~\ref{fig:The-plots-of-IDelta}~(a)-(b), the
analytical scalings of $I_{0}(\Delta)$ for $\kappa_{0,\,l}=1$ and
$\kappa_{0.2,\,l}=(1+\ln l)^{-0.2}$ shown by the blue solid lines,
are in excellent agreement with their respective numerical calculations,
shown by the cyan and red triangles in Fig.~\ref{fig:The-plots-of-IDelta}~(a)-(b),
respectively.}

\begin{figure}
\begin{picture}(200, 240) \put(5,130){\includegraphics[width=0.7\linewidth]{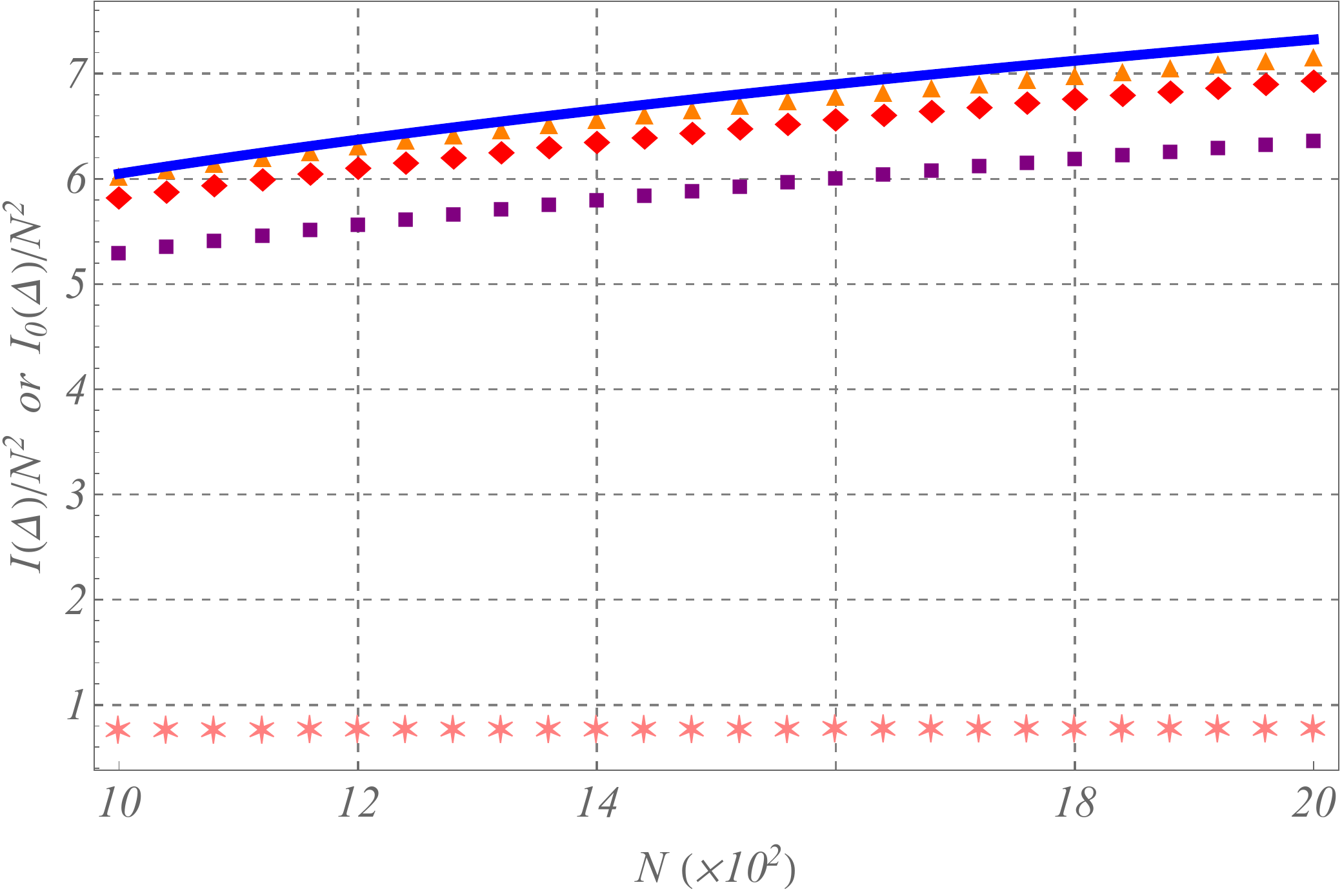}}\put(0,
0){\includegraphics[width=0.7\linewidth]{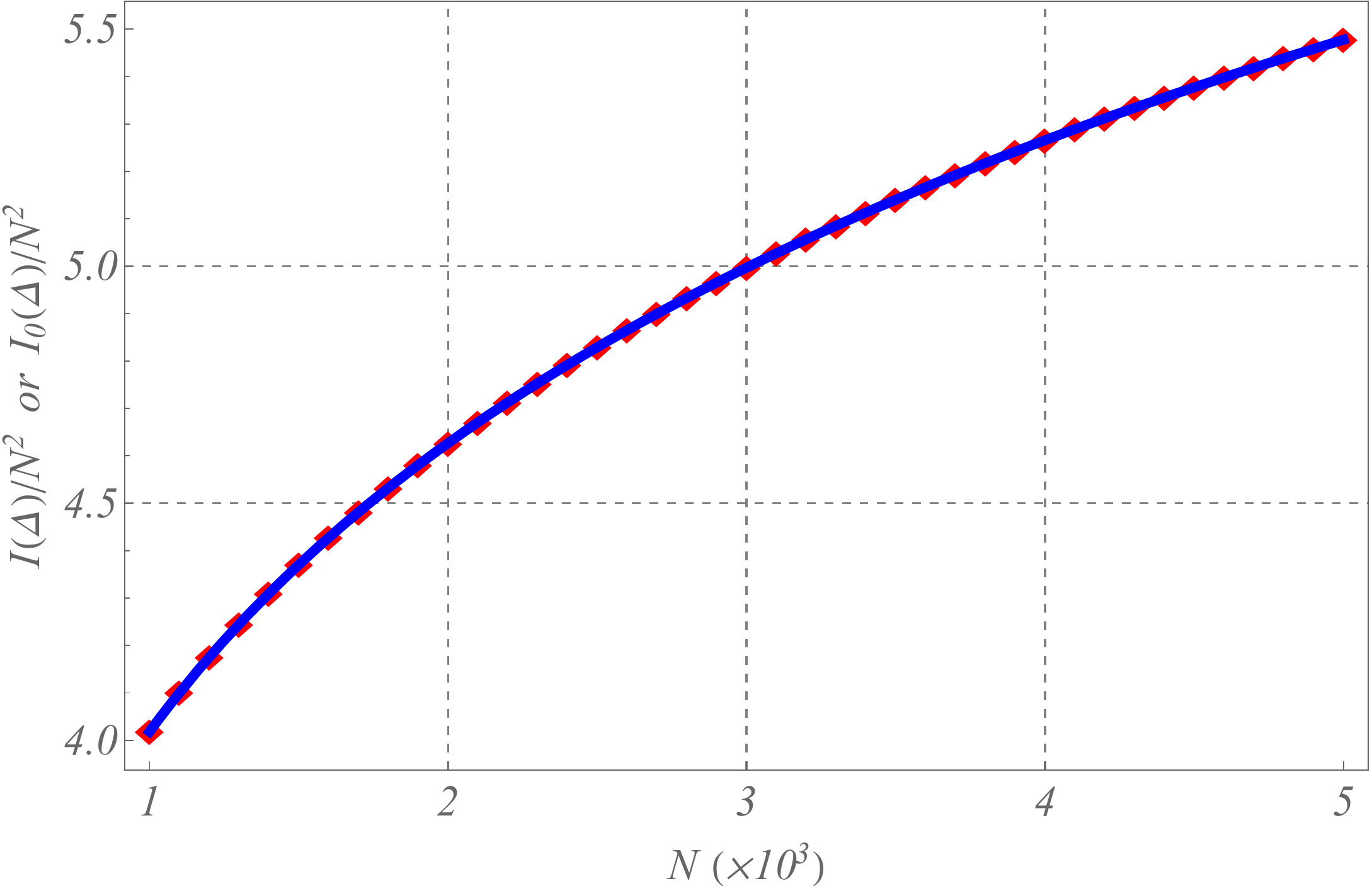}} \put(25,230){\footnotesize{}{}{}(a)}
\put(25,100){\footnotesize{}{}{}(b)} \end{picture}

\caption{\label{fig:The-plots-of-IDelta} Quantum Fisher information $I(\Delta)$
for $\Delta$ estimation in the case with no control or imperfect
control, as well as for optimal control as function of $N$ for (a)
$\kappa_{l,\,\alpha}=l^{-\alpha}$ and (b) $\kappa_{l,\,\alpha}=(1+\ln l)^{-\alpha}$.
The probe time $T=1$ in both figures. (a) All the discrete points
are numerically calculated from Eq.~(\ref{eq:I-theta}). The values
of the parameters for (i) red circle dots: $J=\mu=0.5\Delta$ and
$\alpha=0$ (ii) purple squares: $J=\mu=\Delta$ and $\alpha=0$ (iii)
pink stars are $J=\mu=\Delta$ and $\alpha=0.5$ (iv) Cyan triangles:
$J=\mu=0$ and $\alpha=0$. The blue solid line is the scaling $N^{2}(\ln N)^{2}$,
where the prefactor is determined by the QFI for $J=\mu=0$ and $N=1000$.
(b) The red triangles are numerical calculations of $I_{0}(\Delta)$
for $\alpha=0.2$ and the blue line is the fitted to the red triangles
with $A(\ln N)^{c}+B$, where $A=0.20$, $c=1.54$ and $B=0.17$.
The values $c$ is very close to the expected value $2(1-\alpha)=1.6$.
The slight deviation of the scaling exponent between theory and the
fitted results is because $\ln N$ is a very slowly increasing function
compared to the power functions.}
\end{figure}

\subsection{Resilience of the scaling under no or imperfect control}

We have seen that the HS of $I_{0}(J)$ and $I_{0}(\mu)$ is due to
the fact that the spectrum of $\mathscr{E}_{J}(k)$ and $\mathscr{E}_{\mu}(k)$
is regular near $k=0$, while the possibility of super-HS scaling
in $I_{0}(\Delta)$ is due to the fast divergence of $\mathscr{E}_{\Delta}(k)$
near $k=0$. It is natural to consider the fate of these scaling laws
when control is not optimally applied or is not available. According
to Eq.~(\ref{eq:I}), we find 
\begin{equation}
I(\theta)=\left[\sum_{k}\mathscr{E}_{\theta}(k)\right]^{2}.\label{eq:I-theta}
\end{equation}
Let us first discuss the estimation of $J$. First, from Eqs.~(\ref{eq:eps-theta},~\ref{eq:xi-k-def}),
one can readily obtain
\begin{equation}
\partial_{J}\epsilon_{J}(k)=\cos k(J\cos k+\mu)/\epsilon_{J}(k),
\end{equation}
\begin{equation}
\xi_{J}(k)=\Delta f_{\alpha}(k)\cos k/[2\epsilon_{J}^{2}(k)].
\end{equation}
Since we focus on the no-control or imperfect control case, $\Delta\neq0$.
We see that the only possibility for $\partial_{J}\epsilon_{J}(k)$
and $\partial_{\mu}\epsilon_{\mu}(k)$ to blow up is when their denominators
vanish, i.e., $J\cos k+\mu=0$ and $f_{\alpha}(k)=0$ near $k=0$.
However, we note that whenever $f_{\alpha}(k)=0$, $\partial_{J}\epsilon_{J}(k)=\pm\cos k$.
The same argument also applies to $\xi_{J}(k)$. Therefore $\mathscr{E}_{J}(k)$
does not blow up. Thus we conclude in the absence of controls or in
the presence of imperfect control, the HS is not affected. For the
estimation of $\Delta$, it is readily found from Eqs.~(\ref{eq:eps-theta},~\ref{eq:xi-k-def})
that
\begin{equation}
\partial_{\Delta}\epsilon_{\Delta}(k)=\Delta f_{\alpha}^{2}(k)/[4\epsilon_{\Delta}(k)],
\end{equation}
\begin{equation}
\xi_{\Delta}(k)=-(J\cos k+\mu)f_{\alpha}(k)/[2\epsilon_{\Delta}^{2}(k)].
\end{equation}
Since around $k=0$, $\epsilon_{\Delta}(k)\sim f_{\alpha}(k)$, we
know $\lim_{k\to0}\xi_{\Delta}(k)=0$ and $\partial_{\Delta}\epsilon_{\Delta}(k)\sim f_{\alpha}(k)$.
Therefore, the dominant divergence in $\mathscr{E}_{\Delta}(k)$ is
controlled by $\partial_{\Delta}\epsilon_{\Delta}(k)$ and is the
same as the case of the optimal estimation of $\Delta$. The scaling
of estimating $\Delta$ is again unchanged.

\textcolor{blue}{Fig.~\ref{fig:The-plots-of-IJ} and \ref{fig:The-plots-of-IDelta}
show a comparison between the cases with optimal controls and with
no controls or imperfect controls, respectively. As we can see from
these figures, the slopes of the lines for the cases with no or imperfect
control match the one for optimal control. The same conclusion holds
for the estimation of $\mu$. Therefore, the role of optimal quantum
controls here is to improve the prefactor of the leading order scaling
of the ultimate QFI rather than the scaling exponent. }

\section{Discussions and conclusions}

We have established that the scaling of the QFI for estimating the
superconducting strength $\Delta$ is bounded by $N^{2}(\ln N)^{2}$
rather than the HS due to the long-range interactions. As in Eq.~\eqref{eq:LRK},
long-range interactions contains $N^{2}$ terms whose strength is
controlled by $\kappa_{l,\,\alpha}$. Intuitively, if $\kappa_{l,\,\alpha}$
decays quickly, these $N^{2}$ terms effectively behave like a local
interaction containing only $N$ terms, like in the estimation of
$J$ and $\mu$, and lead to the HS in estimating $\Delta$. However,
if $\kappa_{l,\,\alpha}$ decays sufficiently slow, these $N^{2}$
terms can collectively give rise to the super-HS behavior. We have
illustrated this in two examples with $\kappa_{l,\,\alpha}=l^{-\alpha}$
and $\kappa_{l,\,\alpha}=(1+\ln l)^{-\alpha}$, respectively. Interestingly,
when $N$ is not large enough, we have shown in Appendix~\ref{sec:Finite-size-scaling}
that super-HS $N^{2}(\ln N)^{2}$ and $N^{2}(\ln\ln N)^{2}$ can also
occur as long as 
\begin{equation}
\epsilon\ll(\ln N)^{-1}
\end{equation}
 for $\kappa_{x,\,\epsilon}=x^{-\epsilon}$ and 
\begin{equation}
\epsilon\ll\ln\ln N
\end{equation}
for $\kappa_{l,\,1+\epsilon}=(1+\ln l)^{-(1+\epsilon)}$, respectively.
Note that the LRK model here is \textit{linear}, and thus different
from the super-HS in the nonlinear models ~\citep{beau2017nonlinear,boixo2007generalized,roy2008exponentially}.
Since the HS characterizes the many-body entanglement of the probes
if the generator only contains local operators~\citep{hyllus2012fisherinformation,toth2012multipartite,pezz`e2017multipartite},
our results may indicate there may be an intimate connection between
the HS to super-HS transition and the property of quantum entanglement.

One can also view the super-HS in the spin representation: LRK Hamiltonian~(\ref{eq:LRK})
can be transformed into the one for spin systems via the Jordan-Wigner
transformation~\citep{coleman2015introduction}. The resulting Hamiltonian
becomes (see Appendix~\ref{sec:spin-rep})
\begin{align}
H_{\text{spin}} & =-\frac{J}{4}\sum_{j=1}^{N}(\sigma_{j}^{x}\sigma_{j+l}^{x}+\sigma_{j}^{y}\sigma_{j+l}^{y})-\frac{\mu}{2}\sum_{j=1}^{N}\sigma_{j}^{z}\nonumber \\
 & -\frac{\Delta}{8}\sum_{j=1}^{N}\kappa_{1,\,\alpha}(\sigma_{j}^{x}\sigma_{j+l}^{x}-\sigma_{j}^{y}\sigma_{j+l}^{y})\nonumber \\
 & +\frac{\Delta}{8}\sum_{j=1}^{N}\sum_{l=2}^{N-1}(-1)^{l}\kappa_{l,\,\alpha}(\sigma_{j}^{x}\sigma_{j+l}^{x}-\sigma_{j}^{y}\sigma_{j+l}^{y})\otimes_{k=1}^{l-1}\sigma_{j+k}^{z},\label{eq:Hspin}
\end{align}
which contains the long-range pairing term involves interaction among
$(l+1)$-spins, with $1\le l\le N-1$. This agrees with the intuition
that for spin systems, reaching the super-HS requires interactions
involving more than one single spin operator~\citep{roy2008exponentially,boixo2007generalized}.

We have further shown that the singularity is not altered by whether
external control is optimally applied or not. Therefore, we conclude
that in the LRK model, quantum controls can improve the ultimate QFI
by altering the prefactor while preserving the scaling exponent.

\textcolor{blue}{Our results are of direct relevance to practical
quantum metrology with quantum dots~\citep{qiao2021longdistance},
trapped ions~\citep{richerme2014nonlocal,jurcevic2014quasiparticle}
and cold atoms~\citep{hung2016quantum}. } Our findings should be
applicable to the relation between the HS /super-HS and the many-body
entanglement~\citep{hyllus2012fisherinformation,toth2012multipartite,pezz`e2017multipartite},
the physical preparations of the optimal initial states (Fermionic
GHZ states)~\citep{shapourian2019entanglement}, optimal detection
associated with the HS and super-HS~\citep{yang2019optimal,braunstein_statistical_1994,zhou2020saturating}
and quantum estimation of the LRK in the presence of decoherence and
dissipation~\citep{beau2017nonlinear}.

\section{Acknowledgement}

We thank Hongzhe Zhou for useful discussions. Part of this work was
done when JY visited SP at Sun Yat-Sen University (SYSU),~China,
in May 2021. We would like to thank SYSU for their warmth and hospitality.
Support from the National Natural Science Foundation of China (NSFC)
Grants No.~12075323, the NSF Grants No.~DMR-1809343, and US Army
Research Office Grants No.~W911NF-18-10178 is greatly acknowledged.

\appendix

\section{\label{sec:The-diagonaization}The diagonalization of the LRK Hamiltonian
and the generator for parameter estimation}

We note that the LKR Hamiltonian in momentum space reads~\citep{pezz`e2017multipartite}
\begin{widetext}
\begin{align}
H_{\theta} & =-\frac{J}{2}\sum_{k}\left[a^{\dagger}(k)a(k)+a^{\dagger}(-k)a(-k)\right]\cos k-\frac{\mu}{2}\left[\sum_{k}a^{\dagger}(k)a(k)+\sum_{-k}a^{\dagger}(-k)a(-k)\right]\nonumber \\
 & +\frac{\text{i}\Delta}{4}\sum_{k}\left[a(-k)a(-k)-a^{\dagger}(-k)a^{\dagger}(-k)\right]f_{\alpha}(k).\label{eq:LRK-p-space}
\end{align}
\end{widetext}The Fourier transformation that relates the original
Hamiltonian~(\ref{eq:LRK}) to above Hamiltonian does not depend
on any estimation parameters. Thus, the Fisher information is preserved
by the transformation. Eq.~(\ref{eq:LRK-p-space}) can be diagonalized
as follows

\begin{equation}
H_{\theta}=\frac{1}{2}\sum_{k}\epsilon_{\theta}(k)\begin{bmatrix}a^{\dagger}(k),\, & a(-k)\end{bmatrix}U_{\theta}^{\dagger}(k)\sigma_{z}U_{\theta}(k)\begin{bmatrix}a(k)\\
a^{\dagger}(-k)
\end{bmatrix},
\end{equation}
through the Bogoliubov transformation $U_{\theta}(k)$ 
\begin{align}
U_{\theta}(k) & \equiv\begin{pmatrix}\cos\left[\phi_{\theta}(k)/2\right] & \text{i}\sin\left[\phi_{\theta}(k)/2\right]\\
\text{i}\sin\left[\phi_{\theta}(k)/2\right] & \cos\left[\phi_{\theta}(k)/2\right]
\end{pmatrix},\\
\sin\phi_{\theta}(k) & =-\frac{\Delta}{2}\frac{f_{\alpha}(k)}{\epsilon_{\theta}(k)},\\
\cos\phi_{\theta}(k) & =-\frac{(J\cos k+\mu)}{\epsilon_{\theta}(k)}.
\end{align}
Denoting 
\begin{equation}
\begin{bmatrix}\eta_{\theta}(k)\\
\eta_{\theta}^{\dagger}(-k)
\end{bmatrix}\equiv U_{\theta}(k)\begin{bmatrix}a(k)\\
a^{\dagger}(-k)
\end{bmatrix},
\end{equation}
the Hamiltonian can be rewritten as 
\begin{equation}
H_{\theta}=\sum_{k}\epsilon_{\theta}(k)\left[\eta_{\theta}^{\dagger}(k)\eta_{\theta}(k)-\frac{1}{2}\right].\label{eq:H-diagonal}
\end{equation}
The parameter-dependent constant $\sum_{k}\epsilon_{\theta}(k)/2$
does not contribute the Fisher information and will be suppressed.
The generator for parameter estimation is~\citep{pang2017optimal,pang2014quantum}

\begin{equation}
G_{\theta}=\int_{0}^{T}\mathcal{U}_{\theta}^{\dagger}(\tau)\partial_{\theta}H_{\theta}\mathcal{U}_{\theta}(\tau)d\tau,\label{eq:G-def}
\end{equation}
where the evolution operator is $\mathcal{U}_{\theta}(\tau)=e^{-\text{i}H_{\theta}\tau}$.
According to Eq.~(\ref{eq:H-diagonal}), the generator contains two
parts: The first part is due to the $\partial_{\theta}\epsilon_{\theta}(k)$
and the other is due to $\partial_{\theta}\eta_{\theta}^{\dagger}$
and $\partial_{\theta}\eta_{\theta}(k)$. It is readily found that
\begin{gather}
\begin{bmatrix}\partial_{\theta}\eta_{\theta}(k)\\
\partial_{\theta}\eta_{\theta}^{\dagger}(-k)
\end{bmatrix}=\frac{d\phi_{\theta}}{d\theta}[\partial_{\phi}U_{\theta}(k)U_{\theta}^{-1}(k)]\begin{bmatrix}\eta_{\theta}(k)\\
\eta_{\theta}^{\dagger}(-k)
\end{bmatrix}\nonumber \\
=\frac{\text{i}}{2}\frac{d\phi_{\theta}(k)}{d\theta}\sigma_{x}\begin{bmatrix}\eta_{\theta}(k)\\
\eta_{\theta}^{\dagger}(-k)
\end{bmatrix}=-\frac{\text{i}\partial_{\theta}\cos\phi_{\theta}(k)}{2\sin\phi_{\theta}(k)}\sigma_{x}\begin{bmatrix}\eta_{\theta}(k)\\
\eta_{\theta}^{\dagger}(-k)
\end{bmatrix}.
\end{gather}
Therefore, \begin{widetext}
\begin{equation}
\partial_{\theta}H_{\theta}=\sum_{k}\partial_{\theta}\epsilon_{\theta}(k)\eta_{\theta}^{\dagger}(k)\eta_{\theta}(k)+\frac{\text{i}}{2}\sum_{k}\xi_{\theta}(k)\epsilon_{\theta}(k)\left(\eta_{\theta}(-k)\eta_{\theta}(k)-\eta_{\theta}^{\dagger}(k)\eta_{\theta}^{\dagger}(-k)\right).\label{eq:dHdtheta}
\end{equation}
Substituting Eq.~(\ref{eq:dHdtheta}) into Eq.~(\ref{eq:G-def}),
we find 
\begin{equation}
G_{\theta}=T\sum_{k}\partial_{\theta}\epsilon_{\theta}(k)\eta_{\theta}^{\dagger}(k)\eta_{\theta}(k)+\frac{\text{i}}{2}\sum_{k}\xi_{\theta}(k)\epsilon_{\theta}(k)\int_{0}^{T}d\tau e^{\text{i}\sum_{p}\epsilon_{\theta}(p)\eta_{\theta}^{\dagger}(p)\eta_{\theta}(p)\tau}\left(\eta_{\theta}(-k)\eta_{\theta}(k)-\eta_{\theta}^{\dagger}(k)\eta_{\theta}^{\dagger}(-k)\right)e^{-\text{i}\sum_{p^{\prime}}\epsilon_{\theta}(p^{\prime})\eta_{\theta}^{\dagger}(p^{\prime})\eta_{\theta}(p^{\prime})\tau}.\label{eq:G}
\end{equation}
Note that the product of two Fermionic creation and annihilation operators
behaves like a $c$-number when commuting with Fermionic operators
in other modes. We further note that the negative modes are equivalent
to the positive modes via the identification $-k\sim2\pi-k$. With
these two observations, it is readily checked that

\begin{align}
 & e^{\text{i}\sum_{p}\epsilon_{\theta}(p)\eta_{\theta}^{\dagger}(p)\eta_{\theta}(p)\tau}\left(\eta_{\theta}(-k)\eta_{\theta}(k)-\eta_{\theta}^{\dagger}(k)\eta_{\theta}^{\dagger}(-k)\right)e^{-\text{i}\sum_{p^{\prime}}\epsilon_{\theta}(p^{\prime})\eta_{\theta}^{\dagger}(p^{\prime})\eta_{\theta}(p^{\prime})\tau}\nonumber \\
= & e^{\text{i}[\epsilon_{\theta}(k)\eta_{\theta}^{\dagger}(k)\eta_{\theta}(k)+\epsilon_{\theta}(2\pi-k)\eta_{\theta}^{\dagger}(2\pi-k)\eta_{\theta}(2\pi-k)]\tau}\left(\eta_{\theta}(-k)\eta_{\theta}(k)-\eta_{\theta}^{\dagger}(k)\eta_{\theta}^{\dagger}(-k)\right)e^{-\text{i}[\epsilon_{\theta}(k)\eta_{\theta}^{\dagger}(k)\eta_{\theta}(k)+\epsilon_{\theta}(2\pi-k)\eta_{\theta}^{\dagger}(2\pi-k)\eta_{\theta}(2\pi-k)]\tau}\nonumber \\
= & e^{\text{i}\epsilon_{\theta}(k)\tau[\eta_{\theta}^{\dagger}(k)\eta_{\theta}(k)+\eta_{\theta}^{\dagger}(-k)\eta_{\theta}(-k)]}\left(\eta_{\theta}(-k)\eta_{\theta}(k)-\eta_{\theta}^{\dagger}(k)\eta_{\theta}^{\dagger}(-k)\right)e^{-\text{i}\epsilon_{\theta}(k)\tau[\eta_{\theta}^{\dagger}(k)\eta_{\theta}(k)+\eta_{\theta}^{\dagger}(-k)\eta_{\theta}(-k)]}.\label{eq:exp-fermionic}
\end{align}
Using the relation 
\begin{equation}
e^{i\theta\eta^{\dagger}\eta}=1+\eta^{\dagger}\eta(e^{\text{i}\theta}-1)=e^{\text{i\ensuremath{\theta}}}-\eta\eta^{\dagger}(e^{\text{i}\theta}-1)
\end{equation}
for Fermionic operators, Eq.~(\ref{eq:exp-fermionic}) becomes 
\begin{align}
 & e^{\text{i}\epsilon_{\theta}(k)\tau[\eta_{\theta}^{\dagger}(k)\eta_{\theta}(k)+\eta_{\theta}^{\dagger}(-k)\eta_{\theta}(-k)]}\eta_{\theta}(-k)\eta_{\theta}(k)e^{-\text{i}\epsilon_{\theta}(k)\tau[\eta_{\theta}^{\dagger}(k)\eta_{\theta}(k)+\eta_{\theta}^{\dagger}(-k)\eta_{\theta}(-k)]}\nonumber \\
= & [1+\eta_{\theta}^{\dagger}(k)\eta_{\theta}(k)(e^{\text{i}\epsilon_{\theta}(k)\tau}-1)][1+\eta_{\theta}^{\dagger}(-k)\eta_{\theta}(-k)(e^{\text{i}\epsilon_{\theta}(k)\tau}-1)]\eta_{\theta}(-k)\eta_{\theta}(k)\nonumber \\
\times & [e^{-\text{i}\epsilon_{\theta}(k)\tau}-\eta_{\theta}(k)\eta_{\theta}^{\dagger}(k)(e^{-\text{i}\epsilon_{\theta}(k)\tau}-1)][e^{-\text{i}\epsilon_{\theta}(k)\tau}-\eta_{\theta}(-k)\eta_{\theta}^{\dagger}(-k)(e^{-\text{i}\epsilon_{\theta}(k)\tau}-1)]\nonumber \\
= & e^{-\text{i}\epsilon_{\theta}(k)\tau}[1+\eta_{\theta}^{\dagger}(k)\eta_{\theta}(k)(e^{\text{i}\epsilon_{\theta}(k)\tau}-1)]\eta_{\theta}(-k)\eta_{\theta}(k)[e^{-\text{i}\epsilon_{\theta}(k)\tau}-\eta_{\theta}(-k)\eta_{\theta}^{\dagger}(-k)(e^{-\text{i}\epsilon_{\theta}(k)\tau}-1)]\nonumber \\
= & -e^{-\text{i}\epsilon_{\theta}(k)\tau}[1+\eta_{\theta}^{\dagger}(k)\eta_{\theta}(k)(e^{\text{i}\epsilon_{\theta}(k)\tau}-1)]\eta_{\theta}(k)\eta_{\theta}(-k)[e^{-\text{i}\epsilon_{\theta}(k)\tau}-\eta_{\theta}(-k)\eta_{\theta}^{\dagger}(-k)(e^{-\text{i}\epsilon_{\theta}(k)\tau}-1)]\nonumber \\
= & -e^{-2\text{i}\epsilon_{\theta}(k)\tau}\eta_{\theta}(k)\eta_{\theta}(-k)=e^{-2\text{i}\epsilon_{\theta}(k)\tau}\eta_{\theta}(-k)\eta_{\theta}(k),
\end{align}
where we have used $\eta_{\theta}^{2}(-k)=\eta_{\theta}^{2}(k)=0$.
The generator now becomes 
\begin{align}
G_{\theta} & =\sum_{k}\left\{ T\partial_{\theta}\epsilon_{\theta}(k)\eta_{\theta}^{\dagger}(k)\eta_{\theta}(k)\right.\nonumber \\
 & \left.+\frac{\xi_{\theta}(k)}{4}\left[(1-e^{-2\text{i}\epsilon_{\theta}(k)T})\eta_{\theta}(-k)\eta_{\theta}(k)+(1-e^{2\text{i}\epsilon_{\theta}(k)T}))\eta_{\theta}^{\dagger}(k)\eta_{\theta}^{\dagger}(-k)\right]\right\} .\label{eq:G-general}
\end{align}
\end{widetext}We rewrite Eq.~\eqref{eq:G-general} in a more compact
form 
\begin{equation}
G_{\theta}=\frac{1}{2}\sum_{k}\begin{bmatrix}\eta_{\theta}^{\dagger}(k),\, & \eta_{\theta}(-k)\end{bmatrix}\mathscr{G}_{\theta}(k)\begin{bmatrix}\eta_{\theta}(k)\\
\eta_{\theta}^{\dagger}(-k)
\end{bmatrix},
\end{equation}
where the matrix $\mathscr{G}_{\theta}(k)$ is defined as 
\begin{align}
\mathscr{G}_{\theta}(k) & \equiv T\partial_{\theta}\epsilon_{\theta}(k)\sigma_{z}+\frac{\xi_{\theta}(k)}{2}\left(1-\cos[2\epsilon_{\theta}(k)T]\right)\sigma_{x}\nonumber \\
 & +\frac{\xi_{\theta}(k)}{2}\sin[2\epsilon_{\theta}(k)T]\sigma_{y}\nonumber \\
 & =\mathscr{E}_{\theta}(k)\bm{n}_{\theta}(k)\cdot\bm{\sigma}.
\end{align}
Furthermore we note that
\begin{equation}
V_{\theta}^{\dagger}(k)\left[\bm{n}_{\theta}(k)\cdot\bm{\sigma}\right]V_{\theta}(k)=\sigma_{z},
\end{equation}
where 
\begin{equation}
V_{\theta}(k)=\begin{pmatrix}\ket{\uparrow_{\bm{n}_{\theta}(k)}},\, & \ket{\downarrow_{\bm{n}_{\theta}(k)}}\end{pmatrix}.
\end{equation}
and $\ket{\uparrow_{\bm{n}_{\theta}(k)}}$ and $\ket{\downarrow_{\bm{n}_{\theta}(k)}}$
are the vectors aligned and anti-aligned with the vector $\bm{n}_{\theta}(k)$
on the Bloch sphere, respectively. Introducing 
\begin{equation}
\begin{bmatrix}\psi_{\theta}(k)\\
\psi^{\dagger}(-k)
\end{bmatrix}\equiv V_{\theta}(k)\begin{bmatrix}\eta_{\theta}(k)\\
\eta^{\dagger}(-k)
\end{bmatrix},\label{eq:psi-k-def}
\end{equation}
one can readily obtain Eqs.~(\ref{eq:Generator-diagonal},~\ref{eq:calEk})
in the main text.

\section{\label{sec:Euler-Maclaurin-formula}The Euler-Maclaurin formula}
\begin{lem}
\label{lem:series-integral-rep}(Euler-Maclaurin formula) For arbitrary
function $g(x)$ with continuous derivatives, the infinite series
$\sum_{n=a}^{b}g(m)$ can be converted the corresponding integral
plus remainder terms via the Euler-Maclaurin formula~\citep{knopp1990theoryand}.
\begin{equation}
\sum_{n=a}^{n=b}g(n)=\int_{a}^{b}g(x)dx+R,\label{eq:E-M}
\end{equation}
where the remainder is 
\begin{align}
R & =\frac{1}{2}[g(b)-g(a)]+\sum_{m=1}^{M}\frac{b_{2m}}{(2m)!}\left[g^{(2m-1)}(b)-g^{(2m-1)}(a)\right]\nonumber \\
 & +\int_{a}^{b}\frac{1}{(2M+1)!}P_{2M+1}(x)g^{(2M+1)}(x)dx.
\end{align}
Here, $M$ can be arbitrarily chosen from the natural numbers $0,1,\,2,\cdots$,
$b_{2m}$ is the Bernoulli number. $P_{0}(x)=1$ for $M>0$ 
\begin{equation}
P_{M}(x)=\frac{1}{M!}B_{M}(\{x\}),\label{eq:P-def}
\end{equation}
where$\{x\}\equiv x-[x]$ and $B_{M}$ is the Bernoulli polynomial.
\end{lem}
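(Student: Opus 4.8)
The plan is to prove \eqref{eq:E-M} by repeated integration by parts against the periodic Bernoulli functions $P_m$, organized as an induction on $M$; the only external inputs are the standard Bernoulli-polynomial identities $B_m'(x)=mB_{m-1}(x)$, $B_m(0)=B_m(1)=b_m$ for $m\ge2$, $B_1(0)=-B_1(1)=-\tfrac12$, and the vanishing of the odd Bernoulli numbers $b_{2j+1}=0$ for $j\ge1$. For the base case $M=0$ I would work one unit cell at a time: on $(n,n+1)$ with $a\le n\le b-1$ one has $1=P_0(x)=P_1'(x)$, so integration by parts gives $\int_n^{n+1}g(x)\,dx=\bigl[P_1(x)g(x)\bigr]_{n^+}^{(n+1)^-}-\int_n^{n+1}P_1(x)g'(x)\,dx$, and the one-sided values $P_1((n+1)^-)=\tfrac12$, $P_1(n^+)=-\tfrac12$ of the sawtooth $P_1(x)=\{x\}-\tfrac12$ turn the boundary term into $\tfrac12[g(n)+g(n+1)]$. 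Summing over $n=a,\dots,b-1$ telescopes these into $\tfrac12[g(a)+g(b)]$ and yields $\sum_{n=a}^{b}g(n)=\int_a^b g(x)\,dx+\tfrac12[g(a)+g(b)]+\int_a^b P_1(x)g'(x)\,dx$, which is \eqref{eq:E-M} with $M=0$. The one delicate point is precisely this cell-by-cell argument: $P_1$ jumps at the integers, so the first integration by parts must use one-sided limits; from the next step onward every $P_m$ is continuous and no such care is needed.

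To pass from $M$ to $M+1$ I would integrate the remainder integral $\int_a^b P_{2M+1}(x)g^{(2M+1)}(x)\,dx$ by parts twice more, using the antiderivative relation $P_m'=P_{m-1}$ (which follows from $B_m'=mB_{m-1}$). The first step produces the boundary term $\bigl[P_{2M+2}(x)g^{(2M+1)}(x)\bigr]_a^b$; since $P_{2M+2}$ takes the value $b_{2M+2}/(2M+2)!$ at both integers $a$ and $b$, this is exactly the new $m=M+1$ contribution $\tfrac{b_{2(M+1)}}{(2(M+1))!}\bigl[g^{(2M+1)}(b)-g^{(2M+1)}(a)\bigr]$ to $R$. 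The second step produces a boundary term proportional to $P_{2M+3}$ evaluated at $a$ and $b$, which vanishes because $b_{2M+3}=0$, and leaves the new remainder integral $\int_a^b P_{2M+3}(x)g^{(2M+3)}(x)\,dx$; tracking the $1/m!$ normalizations through the two steps reproduces the remainder coefficient and closes the induction.

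Since every integration by parts above differentiates $g$ and only integrates the bounded, piecewise-smooth functions $P_m$, the hypothesis $g\in C^{2M+1}$ on $[a,b]$ is all that is needed, and \eqref{eq:E-M} is an exact identity rather than an asymptotic one. The main obstacle is therefore purely bookkeeping: getting the signs and the $1/m!$ factors consistent across the two integrations by parts, and handling the integer-point jump of $P_1$ in the base case. I would also reconcile the boundary term $\tfrac12[g(b)-g(a)]$ as written in the statement with the $\tfrac12[g(a)+g(b)]$ produced by the derivation above; the difference amounts to a sign convention for $B_1$ (equivalently, whether one takes $P_1(x)=\{x\}-\tfrac12$ or $P_1(x)=\{x\}$) and does not affect the structure of the proof.
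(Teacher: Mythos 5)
The paper does not actually prove this Lemma: it is quoted from Knopp's textbook (the reference \citep{knopp1990theoryand}) and used as a black box throughout the appendices. So there is no in-paper derivation to compare against, and your integration-by-parts induction is simply supplying the standard textbook proof, which is the right thing to do and is essentially correct: the cell-by-cell handling of the $P_1$ jump in the base case, the two-step integration by parts in the inductive step with the $P_{2M+2}$ boundary term supplying the new Bernoulli coefficient and the $P_{2M+3}$ boundary term vanishing because odd Bernoulli numbers beyond $b_1$ are zero, and the final observation that $g\in C^{2M+1}$ suffices because only $g$ is ever differentiated, are all the standard moves and they close the argument.

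Two remarks on the reconciliation you defer at the end. First, the term $\tfrac12[g(b)-g(a)]$ in the Lemma as printed is not a harmless sign-convention choice — with the sum $\sum_{n=a}^{b}g(n)$ including both endpoints, the correct trapezoidal correction is $\tfrac12[g(a)+g(b)]$, which is exactly what your base case produces; the printed minus sign is a typo in the paper and you should say so rather than attribute it to a choice of $B_1$. Second, there is a similar normalization slip in the remainder integral as printed: with the paper's own definition $P_{M}(x)=B_{M}(\{x\})/M!$, the antiderivative chain $P_{m}'=P_{m-1}$ already absorbs the factorials, so the remainder your induction yields is $\int_{a}^{b}P_{2M+1}(x)\,g^{(2M+1)}(x)\,dx$ with no extra prefactor, whereas the Lemma inserts an additional $1/(2M+1)!$ in front of $P_{2M+1}$, double-counting that factorial. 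Your derivation is the one that is internally consistent with Eq.~\eqref{eq:P-def}; it would strengthen the writeup to state both corrections explicitly rather than leaving them as "to be reconciled."
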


We can use the Euler-Maclaurin formula to approximate a series 
\begin{equation}
\sum_{n=a}^{b}f\left(\frac{(2n+1)\pi}{N}\right)=\sum_{k=k_{a}}^{k_{b}}f(k),
\end{equation}
where we $k=(2n+1)\pi/N$. We assume $f(k)$ is piecewise smooth on
$[k_{a},\,k_{b}]$ and does not blow up on $[k_{a},\,k_{b}]$. \textit{We
allow some discontinuities in the first derivatives if $M=1$ so that
$f(k)$ may contain an absolute value or a square root}. Without loss
of generality, we can assume $f(k)$ is smooth on interval $F_{j}$'s
where $\cup_{j}F_{j}=[k_{a},\,k_{b}]$. When denoting the function
in terms of the variable , these intervals are denotes as $E_{j}$'s.
Applying the Euler-Maclaurin formula for these intervals respectively
with $M=0$, we find

\begin{equation}
\sum_{n=0}^{N-1}f\left[\frac{(2n+1)\pi}{N}\right]=\sum_{j}\left\{ \int_{E_{j}}f\left[\frac{(2x+1)\pi}{N}\right]dx+R_{j}\right\} ,
\end{equation}
where 
\begin{equation}
R_{j}=\frac{2\pi}{N}\int_{E_{j}}P_{1}(x)\sin\left[\frac{(2x+1)\pi}{N}\right]dx+\text{boundary terms},
\end{equation}
and $P_{1}(x)$ is defined in Eq.~(\ref{eq:P-def}). We note that
the boundary terms remains finite and does not scale with $N$. They
will be omitted subsequently. In the Fourier representation, we find
\begin{align}
\sum_{n=0}^{N-1}f\left(\frac{[2n+1]\pi}{N}\right) & =\frac{N}{2\pi}\int_{\pi/N}^{2\pi-\pi/N}f(k)dk+\sum_{j}R_{j},
\end{align}
where 
\begin{equation}
R_{j}=\int_{F_{j}}P_{1}\left(\frac{Nk}{2\pi}-\frac{1}{2}\right)f^{\prime}(k)dk.
\end{equation}
Since $f(k)$ is differentiable on $F_{j}$, $f^{\prime}(k)$ is regular
on $F_{j}$. On the other hand, $P_{1}(x)\in[-1/2,\,1/2]$ is bounded.
We find that $R_{j}$ remains finite as long as the number of the
$F_{j}$'s does not scale with $N$. \textit{So we conclude that when
$f(k)$ is regular, $\sum_{k=k_{a}}^{k=k_{b}}f(k)\sim N$.} For example,
in Eq.~(\ref{eq:I0-J}) of the main text, we take $f(k)=|\cos k|$,
which is differentiable on $F_{1}=[\pi/N,\,\pi/2-\pi/N]$, $F_{2}=[\pi/2+\pi/N,\,3\pi/2-\pi/N]$
and $F_{3}=[3\pi/2+\pi/N,\,2\pi-\pi/N]$ respectively.

However, we note that if $f(k)$ has a singularity in $[0,\,2\pi]$,
the remainder may not be necessarily stay as a constant as $N\to\infty$.
For example, if we take 
\begin{equation}
f(k)=\cot\left[\frac{k}{2}\right],\label{eq:cot}
\end{equation}
where $k\in[\pi/N,\,\pi-\pi/N]$. Then we obtain

\begin{equation}
\sum_{n=0}^{N/2-1}\cot\left[\frac{(2n+1)\pi}{2N}\right]=\frac{N}{\pi}\int_{\pi/N}^{\pi-\pi/N}\cot\left(\frac{k}{2}\right)dk+R,\label{eq:gamma0-EM}
\end{equation}
where 
\begin{align}
R & =\frac{1}{2}\left[\cot\left(\frac{\pi}{2}-\frac{\pi}{2N}\right)-\cot\left(\frac{\pi}{2N}\right)\right]\nonumber \\
 & -\int_{\pi/N}^{\pi-\pi/N}P_{1}\left(\frac{Nk}{2\pi}-\frac{1}{2}\right)\frac{1}{\sin^{2}(k/2)}dk,
\end{align}
with $P_{1}(t)$ given in Eq.~(\ref{eq:P-def}). The integrand in
the remainder has a singularity around $k=0$ and there main integral
is no longer a good approximation of the sum. Nevertheless we can
upper bound the scaling of the integral in the remainder, i.e., 
\begin{align}
 & \bigg|\int_{\pi/N}^{\pi-\pi/N}P_{1}\left(\frac{Nk}{2\pi}-\frac{1}{2}\right)\frac{1}{\sin^{2}(k/2)}dk\bigg|\nonumber \\
\le & \int_{\pi/N}^{\pi-\pi/N}\frac{1}{\sin^{2}(k/2)}\sim\cot\left(\frac{\pi}{2N}\right).\label{eq:R0-bound}
\end{align}
We thus conclude the remainder will scale at most as $N$. Since $\int_{\pi/N}^{\pi-\pi/N}\cot(k/2)dk\sim N\ln N$,
we obtain the scaling of $\gamma_{0}(N)$ in the main text. We see
that in the current case\textit{ }the remainder depends on $N$ instead
of being a constant as indicated in Eq.~(\ref{eq:R0-bound}). We
would like to emphasize that \textit{when the summand of a sum has
a singularity in the limit $N\to\infty$, it is not rigorous to analyze
the scaling of the sum only with the main integral because the remainder
may contribute to the scaling. }

\section{\label{sec:Proof-Scaling-Gamma} The scaling of $\gamma_{\alpha}(N)$
for $f_{\alpha}(k)\le\mathcal{O}(1/k)$ near $k=0$}
\begin{thm}
We shall assume the only possible singularity of $f_{\alpha}(k)$
is near $k=0$, a fact which we will prove in Corollary~\ref{cor:sing-at-zero}.
Then the scaling of $\gamma_{\alpha}(N)$ is controlled by the main
integral if $f_{\alpha}(k)\le\mathcal{O}(1/k)$ near $k=0$.
\end{thm}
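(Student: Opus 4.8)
The plan is to apply the Euler--Maclaurin formula of Lemma~\ref{lem:series-integral-rep} to the sum $\gamma_\alpha(N)=\sum_k|f_\alpha(k)|$, where $k=(2n+1)\pi/N$ with $n=0,\dots,N-1$, and to show that the resulting remainder is of strictly smaller order in $N$ than the main integral $\frac{N}{2\pi}\int_{\pi/N}^{2\pi-\pi/N}|f_\alpha(k)|\,dk$. First I would invoke Corollary~\ref{cor:sing-at-zero} to localize the problem: the only singularity of $f_\alpha$ in $(0,2\pi)$ sits at $k=0$, and by $2\pi$-periodicity a mirror copy sits at $k=2\pi$. I fix a small $\Lambda>0$ independent of $N$ and split the range into the two ``singular'' windows $[\pi/N,\Lambda]$ and $[2\pi-\Lambda,2\pi-\pi/N]$ and the ``bulk'' $[\Lambda,2\pi-\Lambda]$. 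On the bulk the limiting function $f_\alpha$ is analytic, so $|f_\alpha|$ is piecewise smooth with a number of corners that does not grow with $N$; the Euler--Maclaurin analysis of Appendix~\ref{sec:Euler-Maclaurin-formula} then gives a bulk remainder that is $O(1)$, while the bulk part of the main integral is $\Theta(N)$.

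The crux is the two singular windows, and this is exactly where the hypothesis $f_\alpha(k)\le\mathcal{O}(1/k)$ near $k=0$ is used. Writing the Euler--Maclaurin remainder on $[\pi/N,\Lambda]$ in the form of Appendix~\ref{sec:Euler-Maclaurin-formula}, it is the boundary term $\tfrac12[f_\alpha(\Lambda)-f_\alpha(\pi/N)]$ plus $-\int_{\pi/N}^{\Lambda}P_1(Nk/2\pi-\tfrac12)\,f_\alpha'(k)\,dk$. Since $|f_\alpha(\pi/N)|\le C N/\pi$, the boundary term is $O(N)$; and because the near-zero asymptotics of $f_\alpha$ (from the companion appendices on the singularity of $f_\alpha$) make it monotone and single-signed on $(0,\Lambda]$ for $\Lambda$ small, one has $\int_{\pi/N}^{\Lambda}|f_\alpha'(k)|\,dk=|f_\alpha(\pi/N)-f_\alpha(\Lambda)|=O(N)$, so the bound $|P_1|\le\tfrac12$ controls the integral term by $O(N)$ as well. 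Hence the total remainder satisfies $|R|=O(N)$, and more precisely $|R|=O(|f_\alpha(\pi/N)|)$ up to an $O(1)$ additive constant.

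It remains to compare $|R|$ with the main integral, in two regimes. If $f_\alpha$ carries a genuine $1/k$ singularity, as for $\alpha=0$ with $f_0(k)=\cot(k/2)$, then $\int_{\pi/N}^{\Lambda}|f_\alpha(k)|\,dk\sim c\ln N$, so the main integral grows like $N\ln N$ and dominates $|R|=O(N)$; thus $\gamma_\alpha(N)$ inherits the $N\ln N$ scaling of the main integral. If instead $f_\alpha(k)=O(k^{-(1-\beta)})$ with $\beta\in(0,1]$, the near-zero part of $|f_\alpha|$ is integrable, the main integral is $\Theta(N)$, but now the sharper estimate $|R|=O(|f_\alpha(\pi/N)|)=O(N^{1-\beta})=o(N)$ shows the remainder is again subleading. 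In either regime the leading-order scaling of $\gamma_\alpha(N)$ coincides with that of $\frac{N}{2\pi}\int_{\pi/N}^{2\pi-\pi/N}|f_\alpha(k)|\,dk$, which is the assertion.

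I expect the main obstacle to be controlling the singular-window remainder with just the right sharpness: one must pin down the near-zero behaviour of $f_\alpha$ precisely enough to know it is monotone on a fixed interval, so that $\int|f_\alpha'|$ is governed by endpoint values rather than by the full oscillatory variation of the underlying trigonometric polynomial, and one must keep the remainder bound at $o(N)$ rather than merely $O(N)$ in the integrable case, which again hinges on the precise exponent in $f_\alpha(k)=O(k^{-(1-\beta)})$. Both points rely on the regularity conditions imposed on $\kappa_{x,\alpha}$ and on the singularity analysis of $f_\alpha$ established elsewhere in the appendices.
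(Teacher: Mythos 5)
Your proposal reproduces the paper's argument: apply Euler--Maclaurin to $\gamma_\alpha(N)$, invoke Corollary~\ref{cor:sing-at-zero} to localize the singularity at $k=0$, bound the $O(1)$ bulk remainder, and then show the remainder in the window $[\pi/N,\Lambda]$ is at worst $O(N)$ (and $o(N)$ when $f_\alpha$ is strictly subcritical), so the main integral dominates in every case. The only implementation difference is cosmetic: you bound $\int|f_\alpha'|$ via monotonicity and endpoint values, whereas the paper substitutes the leading Laurent term $(1/k)'$ for $f_\alpha'$ inside the remainder integral; both hinge on the same near-zero asymptotics of $f_\alpha$ and yield the same bound.
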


\begin{proof}
Let us first focus on the case $f_{\alpha}(k)$ is strictly slower
than $1/k$ near $k=0$. We denote $E_{j}$ as the intervals where
$f_{\alpha}([2x+1]\pi/N)$ is smooth as function $x$. Similar as
Sec.~\ref{sec:Euler-Maclaurin-formula}, this denomination allows
$f_{\alpha}(k)$ to be piecewise functions joined by smooth functions,
as long as there are no singularities at the joints. The intervals
$E_{j}$becomes $F_{j}$ when the function is written in terms of
the variable $k$. In particular, one can easily show that $f_{\alpha}(\pi/N)$
is positive. Applying Euler-Maclaurin formula~(\ref{eq:E-M}) to
each of these intervals, we find 
\begin{equation}
\gamma_{\alpha}(N)=2\sum_{j}\left[\int_{E_{j}}(-1)^{j-1}f_{\alpha}\left(\frac{[2x+1]\pi}{N}\right)dx+R_{\alpha j}\right],
\end{equation}
where the remainder is 
\begin{equation}
R_{\alpha j}=\int_{E_{j}}P_{1}\left(x\right)(-1)^{j-1}f^{\prime}\left(\frac{[2n+1]x\pi}{N}\right)dx+\text{boundary terms}.
\end{equation}
Since the boundary terms does not scale with $N$, we shall suppress
them in subsequent analysis. Now we change $x$ back to $k$, we find
\begin{equation}
\gamma_{\alpha}(N)=2\left[\frac{N}{2\pi}\int_{\pi/N}^{\pi-\pi/N}\big|f_{\alpha}(k)\big|dk+\sum_{j}R_{\alpha j}\right],\label{eq:gamma-EM}
\end{equation}
where 
\begin{equation}
R_{\alpha j}=\int_{F_{j}}P_{1}\left(\frac{Nk}{2\pi}-\frac{1}{2}\right)(-1)^{j-1}f_{\alpha}^{\prime}(k)dk.
\end{equation}
For the remainders, if $F_{j}$ does not contain the origin, then
the integral in $R_{j}$ is regular and does not scale with constant.
For $F_{j}$ contains the origin, we use a common trick in asymptotic
analysis~\citep{bender2013advanced}: The leading order of a singular
integral can be found by replacing the integrand with its leading
order Laurent expansion near the singular point. In our current case,
since 
\begin{equation}
f_{\alpha}(k)<\mathcal{O}\left(\frac{1}{k}\right),\label{eq:f-singu}
\end{equation}
we find 
\begin{align}
 & \bigg|\int_{F_{j}}P_{1}\left(\frac{Nk}{2\pi}-\frac{1}{2}\right)(-1)^{j-1}f_{\alpha}^{\prime}(k)dk\bigg|\nonumber \\
< & \bigg|\int_{F_{j}}P_{1}\left(\frac{Nk}{2\pi}-\frac{1}{2}\right)(-1)^{j-1}\left(\frac{1}{k}\right)^{\prime}dk\bigg|\nonumber \\
\leq & \bigg|\int_{\pi/N}\left(\frac{1}{k}\right)^{\prime}dk\bigg|\sim N,\label{eq:R-bound}
\end{align}
where we have used that $P_{1}(x)\in[-1/2,\,1/2]$ is bounded. That
is, the remainder scale scale strictly slower than $N$, which is
subleading order compared to the first term on the r.h.s. of Eq.~(\ref{eq:gamma-EM}).

When $f_{\alpha}(k)\sim\mathcal{O}(1/k)$, one can go through the
same argument and will find that the main integral will scale as $N\ln N$
while the upper bound of the scaling of the remainder is $N$. Therefore,
we conclude that the leading order scaling of $\gamma_{\alpha}(N)$
is only given by the main integral if $f_{\alpha}(k)\le\mathcal{O}(1/k)$
near $k=0$.
\end{proof}
We conclude this section by note that the condition $f_{\alpha}(k)\le\mathcal{O}(1/k)$
is non-trivial and essential: \textit{Had $f_{\alpha}(k)$ scaled
as $1/k^{1+\varepsilon}$ near $k=0$, where $\varepsilon$ is an
arbitrary positive number, the above proof would yield that both the
main integral and the upper bound of the remainder $R_{\alpha j}$
scales $N^{1+\varepsilon}$}. The analysis of the scaling of $\gamma_{\alpha}(N)$
would be subtle because the leading order scaling of the main integral
and the remainder $R_{\alpha j}$ might cancel each other. Fortunately,
we see such a situation does not occur because we have shown in the
main text that $f_{\alpha}(k)\le\mathcal{O}(1/k)$.

\section{\label{sec:Singularity-f-alpha-power}The singularity of $f_{\alpha}(k)$
for $\kappa_{l,\,\alpha}=l^{-\alpha}$ with $\alpha\in(0,\,1]$}

In this section, we prove an analytic property of $f_{\alpha}(k)$
for the particular case where $\kappa_{l,\,\alpha}=l^{-\alpha}$:
\begin{equation}
f_{\alpha}(k)\sim\frac{1}{k^{1-\alpha}},\,\alpha\in(0,\,1].\label{eq:f-alpha-ana-power}
\end{equation}
This result can be shown using the singularity of the polylogarithm
functions~\citep{olver2010nisthandbook,vodola2014kitaevchains}.
However, this approach does not allows to obtain general property
of $f_{\alpha}(k)$ when $\kappa_{l,\,\alpha}$ takes a more general
class of functions. Now we shall we shall explicitly show the singularity
of $f_{\alpha}(k)\sim1/k^{1-\alpha}$ around $k=0$ for $\alpha\in(0,\,1]$
without resorting to the polylogarithm functions. Recall 
\begin{equation}
f_{\alpha}(k)\equiv2\sum_{l=1}^{N/2-1}\kappa_{l,\,\alpha}\sin(kl)+\kappa_{N/2,\,\alpha}.\label{eq:f-alpha-def}
\end{equation}
Note that due to the regularity condition~(\ref{eq:kappa-reg-ends}),
we know that $\kappa_{N/2,\,\alpha}$ is finite as $N\to\infty$.
Therefore in what follows we shall omit $\kappa_{N/2,\,\alpha}$ in
the definition of $f_{\alpha}(k)$ because it does not affect the
analytic property of $f_{\alpha}(k)$. Now we are in a position to
prove Eq.~(\ref{eq:f-alpha-ana-power}):
\begin{proof}
Apparently $f_{0}(k)$ can be exactly calculated to be $\cot(k/2)$
which scales as $1/k$ near $k=0$. For the case $\alpha\in(0,\,1]$,
after applying the Euler-Maclaurin formula~(\ref{eq:E-M}), $f_{\alpha}(k)$
becomes 
\begin{equation}
f_{\alpha}(k)=2\mathscr{F}_{\alpha}(k)+\mathscr{R}_{\alpha}(k),\label{eq:falpha-EM}
\end{equation}
where 
\begin{equation}
\mathscr{F}_{\alpha}(k)=\int_{1}^{N/2-1}\frac{\sin(kx)}{x^{\alpha}}dx,
\end{equation}
and the remainder is 
\begin{equation}
\mathscr{R}_{\alpha}(k)=2k\int_{1}^{N}P_{1}(\{x\})\frac{\cos(kx)}{x^{\alpha}}-2\int_{1}^{N}P_{1}(\{x\})\frac{\sin(kx)}{x^{\alpha+1}},\label{eq:R-f}
\end{equation}
where we have again ignored the finite boundary terms. Apparently
the second term in Eq.~(\ref{eq:R-f}) is finite and therefore will
not contribute to the singularity of $f_{\alpha}(k)$, since 
\begin{equation}
2\bigg|\int_{1}^{N}P_{1}(\{x\})\frac{\sin(kx)}{x^{\alpha+1}}\bigg|<\bigg|\int_{1}^{N}\frac{1}{x^{\alpha+1}}\bigg|<\infty.
\end{equation}
Our goal now is to determine the asymptotic behavior of the first
term of Eq.~(\ref{eq:R-f}). Applying Fourier transform of $P_{1}(\{x\})$~\citep{knopp1990theoryand}
\begin{equation}
P_{1}(\{x\})=-\sum_{m=1}^{\infty}\frac{\sin(2m\pi x)}{m\pi},
\end{equation}
we obtain 
\begin{align}
\int_{1}^{N}P_{1}(\{x\})\frac{\cos(kx)}{x^{\alpha}}dx & =-\frac{1}{2}\sum_{m=1}^{\infty}\frac{1}{m\pi}\int_{1}^{N}\left\{ \frac{\sin[(2m\pi+k)x]}{x^{\alpha}}\right.\nonumber \\
 & +\left.\frac{\sin[(2m\pi-k)x]}{x^{\alpha}}\right\} dx.\label{eq:R-f-first}
\end{align}
Integrating by parts, we find that 
\begin{align}
\int_{1}^{N}\frac{\sin[(2m\pi+k)x]}{x^{\alpha}} & =\frac{1}{2m\pi+k}\left\{ \frac{\cos[(2m\pi+k)x]}{x^{\alpha}}\big|_{x=1}^{N}\right.\nonumber \\
 & +\left.\alpha\int_{1}^{N}\frac{\cos[(2m\pi+k)x]}{x^{\alpha+1}}dx\right\} .\label{eq:sin-power-scaling1}
\end{align}
Apparently, the integral on the r.h.s is bounded in the limit $N\to\infty$
as long as $\alpha>0$. So in the limit $N\to\infty$, we find 
\begin{equation}
\int_{1}^{N}\frac{\sin[(2m\pi+k)x]}{x^{\alpha}}\lesssim\frac{1}{2m\pi+k}.
\end{equation}
By similar argument, one can show 
\begin{equation}
\int_{1}^{N}\frac{\sin[(2m\pi-k)x]}{x^{\alpha}}\lesssim\frac{1}{2m\pi-k}.
\end{equation}
Note if $k$ is resonant with $2m\pi$ then the original integral
vanishes and there is no need to do the scaling analysis for the second
integral on the r.h.s. of Eq.~(\ref{eq:R-f-first}). Substituting
above results into Eq.~(\ref{eq:R-f-first}), we find 
\begin{equation}
\int_{1}^{N}P_{1}(\{x\})\frac{\cos(kx)}{x^{\alpha}}\lesssim\sum_{m=1}^{\infty}\frac{1}{m^{2}}<\infty.\label{eq:sin-power-scaling2}
\end{equation}

Up to now, we have shown that there is no singularity in the remainder
as long as $\alpha>0$. To see the singularity in the first term of
Eq.~(\ref{eq:falpha-EM}), we make change of variables $kl=s$ and
obtain 
\begin{equation}
\mathscr{F}_{\alpha}(k)=\frac{1}{k^{1-\alpha}}\int_{k}^{n\pi+\pi/2}\frac{\sin s}{s^{\alpha}}ds,\label{eq:integral-s}
\end{equation}
where we note $Nk=(2n+1)\pi$, where $n=0,\,1\cdots,\,N-1$. In the
limit $N\to\infty$, if finite $n$ is finite, apparently the singularity
of $f_{\alpha}(k)\sim1/k^{1-\alpha}$. On the other hand, if $n\to\infty$,
the integral in Eq.~(\ref{eq:integral-s}) is still finite since
\begin{equation}
\int_{0}^{\infty}ds\frac{\sin s}{s^{\alpha}}=\Gamma(1-\alpha)\cos\left(\frac{\pi\alpha}{2}\right),\text{for},\,\alpha\in(0,\,1]\label{eq:SIntegral-convergence}
\end{equation}
as we will now show. We note that

\begin{equation}
\lim_{n\to\infty}\int_{0}^{n\pi+\pi/2}\frac{\sin s}{s^{\alpha}}ds=\int_{0}^{\infty}ds\frac{\sin s}{s^{\alpha}}=\text{Im}\int_{0}^{\infty}dss^{-\alpha}e^{\text{i}s}.\label{eq:S-infty}
\end{equation}
One can evaluate Eq.~(\ref{eq:S-infty}) by first replacing $s\to\text{i}t$
and obtain 
\begin{align}
\int_{0}^{\infty}dss^{-\alpha}e^{\text{i}s} & =\text{i}^{1-\alpha}\int_{0}^{\text{i}\infty}dtt^{-\alpha}e^{-t}\nonumber \\
 & =\text{i}^{1-\alpha}\lim_{\varepsilon\to0}\left[\int_{0}^{\text{i}\varepsilon}dtt^{-\alpha}e^{-t}+\int_{\text{i}\varepsilon}^{\text{i}\infty}dtt^{-\alpha}e^{-t}\right].\label{eq:s-to-t}
\end{align}
The convergence of the first integral on the r.h.s. of Eq.~(\ref{eq:s-to-t})
requires that $\alpha<1$. The convergence of the second integral
on the r.h.s. of Eq.~(\ref{eq:s-to-t}) requires the integrand vanishes
at $t=\text{i}\infty$, which leads to $\alpha>0$. Now we take advantage
of the analyticity of the integrand for $\alpha\in(0,\,1]$ and rotate
the integral from positive imaginary $t-$axis to positive real $t-$axis,
which yields, 
\begin{equation}
\int_{0}^{\text{i}\infty}dtt^{-\alpha}e^{-t}=\int_{0}^{\infty}dtt^{-\alpha}e^{-t}=\Gamma(1-\alpha),
\end{equation}
which concludes the proof of Eq.~(\ref{eq:SIntegral-convergence})
for $\alpha\in(0,\,1)$. In fact Eq.~(\ref{eq:SIntegral-convergence})
also holds for $\alpha=1$ since $\int_{0}^{\infty}ds\sin s/s=\frac{\pi}{2}$
which can be evaluated by the residue theorem is actually $\lim_{\alpha\to1}\Gamma(1-\alpha)\cos\left(\frac{\pi\alpha}{2}\right)$.

Therefore, we have successfully shown that the singularity of $f_{\alpha}(k)$
only lies in the main term of the Euler-Maclaurin formula, which is
Eq.~(\ref{eq:falpha-EM}).
\end{proof}

\section{\label{sec:Proof-General-D}An integral approximation to $f_{\alpha}(k)$}

We show in Sec.~\ref{sec:Singularity-f-alpha-power} that the singularity
of $f_{\alpha}(k)$ when $\kappa_{l,\,\alpha}=l^{-\alpha}$ can be
explicitly found with only elementary techniques, without resorting
to the polylogarithmic function as in the original proposal of the
LRK~\citep{vodola2014kitaevchains}. The advantage of this approach
is that it will allows us to prove the following theorem for general
functions $\kappa_{l,\,\alpha}$ that that satisfy the regularity
conditions~(\ref{eq:kappa-reg-ends},~\ref{eq:kappa-regularity}):
\begin{thm}
\label{thm:General-d}We consider a general piecewise smooth function
$\kappa_{x,\,\alpha}$ that satisfies the regularity conditions in
the main text, i.e., $\kappa_{x,\,\alpha}$ satisfies (i) 
\begin{align}
\big|\kappa_{x,\,\alpha}^{(q)} & \big|<\infty,\,q=0,\,1,\,\cdots2Q,\label{eq:kappa-reg-ends}
\end{align}
which holds piecewisely on $[1,\,\infty]$, and

(ii) 
\begin{equation}
\big|\int_{1}^{\infty}\kappa_{x,\,\alpha}^{(2Q+1)}dx\big|<\infty,\label{eq:kappa-regularity}
\end{equation}
where $Q$ is a non-negative integer and the superscript $(q)$ denotes
the $q$-th derivative with respect to $x$.

Then the singularity of $f_{\alpha}(k)$ near $k=0$ is controlled
by the main integral in the Euler-Maclaurin formula, i.e., the first
term in 
\begin{equation}
f_{\alpha}(k)=2\mathscr{F}_{\alpha}(k)+\mathscr{R}_{\alpha}(k).\label{eq:f-rep}
\end{equation}
\end{thm}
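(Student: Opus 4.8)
The plan is to extend the argument of Appendix~\ref{sec:Singularity-f-alpha-power} by applying the Euler--Maclaurin formula of Lemma~\ref{lem:series-integral-rep} with $M=Q$ (instead of $M=0$) to the sum $f_\alpha(k)=2\sum_{l=1}^{N/2-1}\kappa_{l,\,\alpha}\sin(kl)+\kappa_{N/2,\,\alpha}$. Because $\kappa_{N/2,\,\alpha}$ is finite by~\eqref{eq:kappa-reg-ends} it drops out of the analytic structure, and because $\kappa_{x,\,\alpha}$ is only piecewise smooth one applies the formula on each smooth sub-interval separately, exactly as in Appendices~\ref{sec:Euler-Maclaurin-formula} and~\ref{sec:Proof-Scaling-Gamma}; the number of sub-intervals does not scale with $N$, so the joint and boundary contributions stay finite. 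This yields the representation~\eqref{eq:f-rep} with $\mathscr{F}_\alpha(k)=\int_1^{N/2-1}\kappa_{x,\,\alpha}\sin(kx)\,dx$, together with a remainder $\mathscr{R}_\alpha(k)$ built from (a) the Euler--Maclaurin endpoint term $\tfrac12[g(N/2-1)-g(1)]$ and the intermediate-derivative terms $\sum_{m=1}^{Q}\tfrac{b_{2m}}{(2m)!}[g^{(2m-1)}(N/2-1)-g^{(2m-1)}(1)]$ with $g(x)=\kappa_{x,\,\alpha}\sin(kx)$, and (b) the integral remainder proportional to $\int_1^{N/2-1}B_{2Q+1}(\{x\})\,\partial_x^{2Q+1}[\kappa_{x,\,\alpha}\sin(kx)]\,dx$. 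The objective is to prove that $\mathscr{R}_\alpha(k)$ remains bounded as $k\to0$, so that any divergence of $f_\alpha(k)$ at the origin must come from $2\mathscr{F}_\alpha(k)$.

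The terms of type (a) are harmless: by the Leibniz rule $g^{(q)}(x)=\sum_{i=0}^{q}\binom{q}{i}\kappa^{(q-i)}_{x,\,\alpha}\,k^{i}\sin(kx+i\pi/2)$, and for $q\le 2Q-1$ every factor $\kappa^{(q-i)}_{x,\,\alpha}$ is bounded by~\eqref{eq:kappa-reg-ends}; the $i\ge1$ pieces carry $k^{i}$ and vanish as $k\to0$, while the $i=0$ piece is $\kappa^{(q)}_{x,\,\alpha}\sin(kx)$, bounded uniformly in $k$. For type (b) I would Leibniz-expand the integrand into $2Q+2$ terms $T_j\propto k^{j}\int_1^{N/2-1}B_{2Q+1}(\{x\})\,\kappa^{(2Q+1-j)}_{x,\,\alpha}(x)\,\sin\!\big(kx+\tfrac{j\pi}{2}\big)\,dx$, $j=0,\dots,2Q+1$. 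The single dangerous term is $T_0$, which contains $\kappa^{(2Q+1)}_{x,\,\alpha}$; it is bounded directly by the integrability hypothesis~\eqref{eq:kappa-regularity}, $|T_0|\le \sup|B_{2Q+1}|\,\int_1^\infty|\kappa^{(2Q+1)}_{x,\,\alpha}|\,dx<\infty$, uniformly in $k$ and $N$ --- the exact analogue of the observation in Appendix~\ref{sec:Singularity-f-alpha-power} that the $\sin(kx)/x^{\alpha+1}$ piece of the $Q=0$ remainder is finite.

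For the remaining $T_j$ with $1\le j\le 2Q+1$ the derivative $\kappa^{(2Q+1-j)}_{x,\,\alpha}$ has order $\le 2Q$ and is only known to be bounded, not integrable, so a crude estimate would grow with $N$. Instead I would reuse the device of Appendix~\ref{sec:Singularity-f-alpha-power}: substitute the Fourier series $B_{2Q+1}(\{x\})\propto\sum_{m\ge1}m^{-(2Q+1)}\sin(2\pi m x)$, apply product-to-sum identities, and reduce each $T_j$ to a sum over $m$ of integrals $\int_1^{N/2-1}\kappa^{(2Q+1-j)}_{x,\,\alpha}(x)\cos[(2\pi m\pm k)x+\phi]\,dx$ with constant phases $\phi$. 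Integrating such an integral by parts repeatedly until the order-$(2Q+1)$ derivative appears bounds it by $\sup|\kappa^{(q)}_{x,\,\alpha}|$ (hypothesis~\eqref{eq:kappa-reg-ends}) and $\int_1^\infty|\kappa^{(2Q+1)}_{x,\,\alpha}|\,dx$ (hypothesis~\eqref{eq:kappa-regularity}) divided by powers of $\omega=2\pi m\pm k$, so the integral is $O(1/\omega)$ uniformly in $N$; summing over $m$ converges because of the extra weight $m^{-(2Q+1)}$, and one is left with $|T_j|=O(k^{j})$, which tends to $0$ as $k\to0$. Combining (a) and (b), $\mathscr{R}_\alpha(k)$ stays bounded while $\mathscr{F}_\alpha(k)$ may diverge, which is the assertion.

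The crux --- and the step I expect to be most delicate --- is type (b): one must isolate the unique Leibniz term $T_0$ whose taming requires precisely the integrability hypothesis~\eqref{eq:kappa-regularity}, since the boundedness hypothesis~\eqref{eq:kappa-reg-ends} alone is too weak to control an $\int_1^N$ of an order-$(2Q+1)$ derivative; and for the other terms one must verify that the compensating powers $k^{j}$ produced by differentiating $\sin(kx)$, together with the $m^{-(2Q+1)}$ weight from the Bernoulli Fourier series, really do convert the otherwise $N$-growing oscillatory integrals into quantities that vanish at $k=0$. One should also check that $M=Q$ is admissible, i.e.\ that $g(x)=\kappa_{x,\,\alpha}\sin(kx)$ possesses $2Q+1$ piecewise-continuous derivatives on the relevant interval, which follows from the piecewise smoothness of $\kappa_{x,\,\alpha}$ assumed in the theorem.
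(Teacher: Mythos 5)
Your proposal is correct and, step for step, reproduces the paper's own proof in Appendix~\ref{sec:Proof-General-D}: the same application of Euler--Maclaurin with $M=Q$, the same Leibniz decomposition of the integral remainder into the lone term $T_0$ (the paper's $\mathscr{R}_{\alpha,0}$) tamed by the integrability hypothesis~\eqref{eq:kappa-regularity}, and, for $T_j$ with $j\ge1$, the same Fourier expansion of the periodized Bernoulli polynomial followed by product-to-sum identities and repeated integration by parts down to $\kappa_{x,\alpha}^{(2Q+1)}$, yielding the bound $|\mathscr{R}_{\alpha,q}(k)|\lesssim k^{q}\sum_m (2m\pi)^{-(2Q+2)}$. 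Your identification of $T_0$ as the term whose control genuinely requires~\eqref{eq:kappa-regularity}, and of the $k^{j}$ and $m^{-(2Q+1)}$ factors as the joint source of convergence for the other terms, is exactly the crux of the paper's argument.
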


Before we start the proof, let us first note that for the long-range
decay function $d_{x,\,\alpha}$, we allow not only smooth functions
of $x$, but also piecewise functions consisting of several smooth
functions. This is because, as we have seen in Sec.~\ref{sec:Euler-Maclaurin-formula}
and \ref{sec:Proof-Scaling-Gamma}, one can apply the Euler-Maclaurin
in a piecewise way. The condition~(\ref{eq:kappa-reg-ends}) indicates
there can be only discontinuities at the joints, but no singularities.
Nevertheless, in what follows, we shall prove for the case when $d_{x,\,\alpha}$
is smooth $[1,\,\infty]$, which can be easily generalized to the
case of piecewise smoothness without any difficulty.
\begin{proof}
We take $M=Q$ in the Euler-Maclaurin formula~(\ref{eq:E-M}), and
obtain 
\begin{equation}
f_{\alpha}(k)=2\mathscr{F}_{\alpha}(k)+\mathscr{R}_{\alpha}(k),
\end{equation}
where 
\begin{equation}
\mathscr{F}_{\alpha}(k)=\int_{1}^{N/2-1}\sin(kx)\kappa_{x,\,\alpha}dx,
\end{equation}
\begin{equation}
\mathscr{R}_{\alpha}(k)=\sum_{q=0}^{2Q+1}\mathscr{R}_{\alpha,\,q}(k)+\sum_{m=1}^{Q}\frac{b_{2m}}{(2m)!}\left[\sin(kx)\kappa_{x,\,\alpha}\right]^{(2m-1)}\bigg|_{x=1}^{x=N/2-1},
\end{equation}
\begin{equation}
\mathscr{R}_{\alpha,\,q}(k)\equiv\mathcal{C}_{Q,\,q}\int_{1}^{N/2-1}P_{2Q+1}(x)[\sin(kx)]^{(q)}\left[\kappa_{x,\,\alpha}\right]^{(2Q+1-q)}dx,\label{eq:scrR-q}
\end{equation}
\begin{equation}
\mathcal{C}_{Q,\,q}\equiv\begin{pmatrix}2Q-1\\
q
\end{pmatrix}\frac{1}{(2Q+1)!}.
\end{equation}
Apparently, the boundary term is finite due to the regularity condition~(\ref{eq:kappa-reg-ends}).
Thus we shall focus on the integral in the remainder $\mathscr{R}_{\alpha}(k)$
subsequently. For $q=0$, i we find 
\begin{align}
\mathscr{R}_{\alpha,\,q}(k)\leq & \mathcal{C}_{Q,\,q}\mathcal{C}_{2Q+1}\bigg|\int_{1}^{N/2-1}\kappa_{x,\,\alpha}^{(2Q+1)}dx\bigg|<\infty,
\end{align}
where we have used the fact that $P_{2Q+1}(x)$ is bounded, $\big|P_{2Q+1}(x)\big|\le\mathcal{C}_{2Q+1}$
and Eq.~(\ref{eq:kappa-regularity}). When $q\ge1$, we apply the
Fourier transform of $P_{2Q+1}(x)$~\citep{knopp1990theoryand} 
\begin{equation}
P_{2Q+1}(x)=\sum_{m=1}^{\infty}(-1)^{Q-1}\frac{2\sin(2m\pi x)}{(2m\pi)^{2Q+1}}.
\end{equation}
to Eq.~(\ref{eq:scrR-q}). For for even $q>0$, we obtain 
\begin{gather}
\mathscr{R}_{\alpha,\,q}(k)=\mathcal{C}_{Q,\,q}k^{q}(-1)^{Q-1+q/2}\sum_{m=1}^{\infty}\frac{1}{(2m\pi)^{2Q+1}}\times\nonumber \\
\int_{1}^{N/2-1}\left(\cos[(2m\pi-k)x]-\cos[(2m\pi+k)x]\right)\kappa_{x,\,\alpha}^{(2Q+1-q)}dx,\label{eq:scrR-even}
\end{gather}
and for odd $q\ge1$, we obtain 
\begin{gather}
\mathscr{R}_{\alpha,\,q}(k)=\mathcal{C}_{Q,\,q}k^{q}(-1)^{Q-1+(q-1)/2}\sum_{m=1}^{\infty}\frac{1}{(2m\pi)^{2Q+1}}\times\nonumber \\
\int_{1}^{N/2-1}\left\{ \sin[(2m\pi+k)x]+\sin[(2m\pi-k)x]\right\} \kappa_{x,\,\alpha}^{(2Q+1-q)}dx.\label{eq:scrR-odd}
\end{gather}
Since the convergence of the series is determined by the behavior
of the general term at large values of the index, we shall focus on
the case of large $m$ in the series in Eqs.~(\ref{eq:scrR-even},
\ref{eq:scrR-odd}) subsequently. Similarly with Eqs.~(\ref{eq:sin-power-scaling1}-\ref{eq:sin-power-scaling2}),
one can perform integration by parts until one gets an integrand that
contains $(\kappa_{x,\,\alpha})^{(2Q+1)}$, which yields 
\begin{align}
 & \int_{1}^{N/2-1}\cos[(2m\pi\pm k)x]\left(\kappa_{x,\,\alpha}\right)^{(2Q+1-q)}dx\nonumber \\
= & \frac{1}{(2m\pi\pm k)}\left\{ \sin[(2m\pi\pm k)x]\left(\kappa_{x,\,\alpha}\right)^{(2Q+1-q)}\big|_{x=1}^{x=\infty}\right.\nonumber \\
+ & \left.\frac{1}{(2m\pi\pm k)}\left[\cos[(2m\pi\pm k)x]\left(\kappa_{x,\,\alpha}\right)^{(2Q+2-q)}\big|_{x=1}^{x=\infty}\right]+\cdots\right\} \nonumber \\
+ & \frac{(-1)^{(q-1)/2}}{(2m\pi\pm k)^{q}}\int_{1}^{N/2-1}\cos[(2m\pi\pm k)x]\kappa_{x,\,\alpha}^{(2Q+1)}dx,
\end{align}
Apparently the last integral is bounded, according to Eq.~(\ref{eq:kappa-regularity}).
Therefore, for large $m$ we know 
\begin{align}
\bigg|\int_{1}^{N/2-1}\cos[(2m\pi\pm k)x]\left(\kappa_{x,\,\alpha}\right)^{(2Q+1-q)}dx\bigg| & \lesssim\frac{\mathcal{C}_{2Q+1-q}(\alpha)}{(2m\pi\pm k)},
\end{align}
where 
\begin{equation}
\mathcal{C}_{q}(\alpha)\equiv\big|\left(\kappa_{x,\,\alpha}^{(q)}\right)_{x=1}\big|+\big|\left(\kappa_{x,\,\alpha}^{(q)}\right)_{x=\infty}\big|.
\end{equation}
which is finite according to the regularity condition Eq.~(\ref{eq:kappa-reg-ends}).
With similar argument, we obtain 
\begin{equation}
\bigg|\int_{1}^{N/2-1}\sin[(2m\pi\pm k)x]\kappa_{x,\,\alpha}^{(2Q+1-q)}dx\bigg|\lesssim\frac{\mathcal{C}_{2Q+1-q}(\alpha)}{(2m\pi\pm k)}
\end{equation}
for large $m$. So we conclude 
\begin{equation}
\big|\mathscr{R}_{\alpha,\,q}(k)\big|\lesssim\mathcal{C}_{Q,\,q}\mathcal{C}_{2Q+1-q}(\alpha)k^{q}\sum_{m=1}^{\infty}\frac{1}{(2m\pi)^{2Q+2}},
\end{equation}
which is bounded for all finite values of $k$. Now we have shown
that the remainder $R_{\alpha,\,q}(k)$ is regular with no singularity
in $k$, which completes the proof.
\end{proof}

\section{\label{sec:Singularity-f}The possible singularity of $\mathscr{F}_{\alpha}(k)$
or $f_{\alpha}(k)$ for $\kappa_{l,\,\alpha}$ satisfying the regularity
conditions}
\begin{cor}
\label{cor:sing-at-zero}

If $\kappa_{l,\,\alpha}$ satisfies the regularity conditions ~(\ref{eq:kappa-reg-ends}-\ref{eq:kappa-regularity}),
then $\mathscr{F}_{\alpha}(k)$ is regular as long as $k\neq0$.
\end{cor}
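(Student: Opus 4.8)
The plan is to build directly on Theorem~\ref{thm:General-d}: since that result already establishes that in the decomposition $f_{\alpha}(k)=2\mathscr{F}_{\alpha}(k)+\mathscr{R}_{\alpha}(k)$ of \eqref{eq:f-rep} the remainder $\mathscr{R}_{\alpha}(k)$ carries no singularity in $k$, proving the corollary amounts to showing that the main term $\mathscr{F}_{\alpha}(k)=\int_{1}^{N/2-1}\sin(kx)\,\kappa_{x,\alpha}\,dx$ is regular whenever $k\neq0$. For every finite $N$ this is a finite integral of a bounded, piecewise-continuous integrand, hence finite and smooth in $k$; the only way regularity can fail is through an $N\to\infty$ blow-up. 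So the whole task reduces to bounding $\mathscr{F}_{\alpha}(k)$ uniformly in $N$ for each fixed $k\neq0$.

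First I would integrate by parts repeatedly, at each step integrating the bounded trigonometric factor (which contributes one harmless power of $1/k$) and differentiating $\kappa_{x,\alpha}$. After $2Q+1$ steps one is left with a finite sum of boundary terms, each of the form $k^{-j}\,g(kx)\,\kappa^{(j-1)}_{x,\alpha}(x)$ with $1\le j\le 2Q+1$ and $g$ one of $\pm\sin,\pm\cos$ (so $|g|\le1$), evaluated at $x=1$ and $x=N/2-1$, plus a single residual integral $k^{-(2Q+1)}\int_{1}^{N/2-1}g(kx)\,\kappa^{(2Q+1)}_{x,\alpha}(x)\,dx$. This is the same manipulation already carried out inside the proof of Theorem~\ref{thm:General-d}, so the algebra is routine; if $\kappa_{x,\alpha}$ is only piecewise smooth one applies it on each of the finitely many smooth pieces of $[1,\infty]$, the extra internal boundary terms being of the same type and finite in number. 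The boundary terms at $x=1$ are finite by the regularity condition \eqref{eq:kappa-reg-ends} (all derivatives up to order $2Q$ are bounded there), and those at $x=N/2-1$ are bounded \emph{uniformly in $N$} by the same condition, since \eqref{eq:kappa-reg-ends} holds on all of $[1,\infty]$ while $|g|\le1$.

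The step I expect to be the crux is the residual oscillatory integral $\int_{1}^{N/2-1}g(kx)\,\kappa^{(2Q+1)}_{x,\alpha}(x)\,dx$: the hypothesis \eqref{eq:kappa-regularity} controls only the \emph{signed} integral of $\kappa^{(2Q+1)}_{x,\alpha}$, so estimating crudely with $|g|\le1$ would wrongly demand absolute integrability. The fix is not to bound it crudely but to keep exploiting the oscillation: one further integration by parts transfers the trigonometric factor off the top derivative onto the lower derivatives $\kappa^{(q)}_{x,\alpha}$ with $q\le 2Q$, whose boundary values are again bounded uniformly in $N$ by \eqref{eq:kappa-reg-ends}, so that \eqref{eq:kappa-regularity} is only ever invoked on a genuinely convergent signed integral. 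Collecting the pieces gives $|\mathscr{F}_{\alpha}(k)|\le C(k)$ for every $k\neq0$ with $C(k)$ independent of $N$, hence $f_{\alpha}(k)$ is regular away from $k=0$, as claimed. Everything else---the $k^{-j}$ prefactors, the finitely many boundary terms, the piecewise bookkeeping---is elementary.
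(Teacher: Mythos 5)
Your integration-by-parts strategy is essentially the paper's own proof: the paper integrates by parts $2Q+1$ times, lands on boundary terms controlled by~\eqref{eq:kappa-reg-ends} plus a residual integral $k^{-2Q}\int_1^{N/2-1}\cos(kx)\,\kappa^{(2Q+1)}_{x,\alpha}\,dx$, and then simply asserts that this integral is bounded ``according to Eq.~\eqref{eq:kappa-regularity}.'' So your first two paragraphs reproduce the paper's argument, and you deserve credit for noticing the weak point that the paper glosses over: \eqref{eq:kappa-regularity} is a hypothesis on the \emph{signed} improper integral $\int_1^\infty\kappa^{(2Q+1)}_{x,\alpha}\,dx$, and convergence of a signed improper integral does not survive multiplication by a bounded oscillatory factor. (A standard counterexample: $\int_1^\infty(\sin x)/x\,dx$ converges, but $\int_1^\infty\sin(x)\cdot(\sin x)/x\,dx$ diverges.) That is a genuine subtlety in the paper's reasoning.

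However, your proposed repair does not close the gap. Integrating by parts once more in the direction that lowers the order of $\kappa$ gives
\begin{equation*}
\int_1^{N/2-1}\cos(kx)\,\kappa^{(2Q+1)}_{x,\alpha}\,dx
=\Bigl[\cos(kx)\,\kappa^{(2Q)}_{x,\alpha}\Bigr]_{x=1}^{x=N/2-1}
+k\int_1^{N/2-1}\sin(kx)\,\kappa^{(2Q)}_{x,\alpha}\,dx,
\end{equation*}
and while the boundary terms are indeed controlled by \eqref{eq:kappa-reg-ends}, the new integral $k\int\sin(kx)\,\kappa^{(2Q)}_{x,\alpha}\,dx$ has a bounded integrand over an unbounded domain with no decay and no assumed monotonicity, so its convergence as $N\to\infty$ is exactly as uncontrolled as before --- you have merely relocated the problem. (Integrating by parts the other way is worse: it introduces $\kappa^{(2Q+2)}$, which the hypotheses say nothing about.) What actually rescues the argument, and what both you and the paper leave tacit, is an additional qualitative input such as eventual monotonicity of $\kappa^{(2Q)}_{x,\alpha}$ (so that Dirichlet's test applies to $\int\cos(kx)\,d\kappa^{(2Q)}_{x,\alpha}$), or equivalently that $\kappa^{(2Q+1)}_{x,\alpha}$ is eventually of one sign so that the convergent signed integral in \eqref{eq:kappa-regularity} is in fact absolutely convergent. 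Both concrete families used in the paper, $\kappa_{l,\alpha}=l^{-\alpha}$ and $\kappa_{l,\alpha}=(1+\ln l)^{-\alpha}$, satisfy this, which is why the result holds; but as a general proof from \eqref{eq:kappa-reg-ends}--\eqref{eq:kappa-regularity} alone, both your write-up and the paper's need that extra hypothesis made explicit.
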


\begin{proof}
The proof is straightforward: performing integration by part for the
main integral in Eq.~(\ref{eq:f-rep}), we find 
\begin{gather}
\int_{1}^{N/2-1}\sin(kx)\kappa_{x,\,\alpha}dx=-\frac{\cos(kx)}{k}\kappa_{x,\,\alpha}|_{x=1}^{x=N/2-1}\nonumber \\
+\frac{\sin(kx)}{k^{2}}\kappa_{x,\,\alpha}^{(1)}\big|_{x=1}^{x=N/2-1}\cdots+\frac{(-1)^{Q}}{k^{2Q}}\int_{1}^{N/2-1}\cos(kx)\kappa_{x,\,\alpha}^{(2Q+1)}.
\end{gather}
Since the integral is bounded according to Eq.~(\ref{eq:kappa-regularity}),
we find 
\begin{equation}
\big|\mathscr{F}_{\alpha}(k)\big|\leq\sum_{m=1}^{2Q}\frac{\mathcal{C}_{m-1}(\alpha)}{k^{m}}.
\end{equation}
Thus, $f_{\alpha}(k)$ or $\mathscr{F}_{\alpha}(k)$ are bounded as
long as $k\neq0$.
\end{proof}
From this proof, we immediately see that the only possible singularity
of $\mathscr{F}_{\alpha}(k)$ is at $k=0$. As we have mentioned in
the main text, we can introduce a trick to get a rough estimate about
the possible singularity of the main integral $\mathscr{F}_{\alpha}(k)$
near $k=0$. We integrate over $k$ from $1/N$ to $\Lambda$, where
$\Lambda$ is finite. This yields 
\begin{equation}
\int_{1/N}^{\Lambda}dk\mathscr{F}_{\alpha}(k)=\int_{1}^{N/2-1}\frac{\kappa_{x,\,\alpha}}{x}dx-\int_{1}^{N/2-1}\frac{\kappa_{x,\,\alpha}\cos(\Lambda x)}{x}dx,\label{eq:int-scrF}
\end{equation}
where we have interchanged the order of integration. According to
Sec.~\ref{sec:The-convergence-integration-by-parts}, the second
integral on the r.h.s. of Eq.~(\ref{eq:int-scrF}) is bounded and
the exact scaling with respect to $N$ can be easily found by integrating
by parts. Therefore the scaling of $\int_{1/N}^{\Lambda}\mathscr{F}_{\alpha}(k)dk$
is totally controlled by the first integral on the r.h.s of Eq.~(\ref{eq:int-scrF}).
If the scaling of $\int_{1/N}^{\Lambda}\mathscr{F}_{\alpha}(k)dk$
can be computed, it can reveal \textit{some partial} information about
the singularity of $\mathscr{F}_{\alpha}(k)$ around $k=0$. For example,
if $\int_{1/N}^{\Lambda}\mathscr{F}_{\alpha}(k)dk\sim\text{constant}$
or $\int_{1/N}^{\Lambda}\mathscr{F}_{\alpha}(k)dk\sim\ln N$, then
we know the singularity of $\mathscr{F}_{\alpha}(k)$ at $k=0$ is
at most $1/k^{1-\varepsilon}$ or $1/k$ respectively, where $\varepsilon$
is arbitrary small positive number.

\section{\label{sec:The-convergence-integration-by-parts}The convergence
of the integral $\int_{1}^{\infty}\kappa_{x,\,\alpha}\cos(\Lambda x)/x]dx$}

One can prove the integral $\int\kappa_{x,\,\alpha}\cos(\Lambda x)/x]dx$
is bounded via integration by parts. First, it is found checked that
\begin{gather}
\int_{1}^{\infty}\kappa_{x,\,\alpha}\cos(\Lambda x)/x]dx=\frac{1}{\Lambda}\sin(\Lambda x)\frac{\kappa_{x,\,\alpha}}{x}\big|_{x=1}^{x=\infty}\nonumber \\
+\frac{1}{\Lambda}\int_{1}^{\infty}\sin(\Lambda x)\frac{\kappa_{x,\,\alpha}}{x^{2}}-\frac{1}{\Lambda}\int_{1}^{\infty}\sin(\Lambda x)\frac{\kappa_{x,\,\alpha}^{(1)}}{x}.
\end{gather}
According to regularity condition~(\ref{eq:kappa-reg-ends}) of $\kappa_{x,\,\alpha}$,
we know 
\begin{equation}
\bigg|\int_{1}^{\infty}\sin(\Lambda x)\frac{\kappa_{x,\,\alpha}}{x^{2}}\bigg|\le\mathcal{C}_{\max,\,\alpha}\bigg|\int_{1}^{\infty}\frac{1}{x^{2}}\bigg|<\infty,\label{eq:q-zero}
\end{equation}
where 
\begin{equation}
\mathcal{C}_{\max,\,\alpha}\equiv\max_{x\in[1,\,\infty]}\kappa_{x,\,\alpha}.
\end{equation}
The convergence of $\int_{1}^{\infty}\kappa_{x,\,\alpha}\cos(\Lambda x)/x]dx$
will depends on the convergence $\int_{1}^{\infty}\sin(\Lambda x)\kappa_{x,\,\alpha}^{(1)}/x$.
Further integrating by parts and applying the same argument, one can
show that the convergence of $\int_{1}^{\infty}\sin(\Lambda x)\kappa_{x,\,\alpha}^{(1)}/x$,
will depend on $\int_{1}^{\infty}\cos(\Lambda x)\kappa_{x,\,\alpha}^{(2)}/x$.
We continue to apply integration by parts until we obtain $\int_{1}^{\infty}\sin(\Lambda x)\kappa_{x,\,\alpha}^{(2Q+1)}/x$,
which yields,

\begin{align}
 & \int_{1}^{\infty}\kappa_{x,\,\alpha}\cos(\Lambda x)/x]dx\nonumber \\
= & \frac{1}{\Lambda}\sin(\Lambda x)\frac{\kappa_{x,\,\alpha}}{x}\big|_{x=1}^{x=\infty}+\frac{1}{\Lambda}\int_{1}^{\infty}\sin(\Lambda x)\frac{\kappa_{x,\,\alpha}}{x^{2}}\\
+ & \frac{1}{\Lambda^{2}}\cos(\Lambda x)\frac{\kappa_{x,\,\alpha}^{(1)}}{x}\big|_{x=1}^{x=\infty}+\frac{1}{\Lambda^{2}}\int_{1}^{\infty}\cos(\Lambda x)\frac{\kappa_{x,\,\alpha}^{(1)}}{x^{2}}\nonumber \\
+ & \cdots+\frac{1}{\Lambda^{2Q+2}}\int_{1}^{\infty}\cos(\Lambda x)\frac{\kappa_{x,\,\alpha}^{(2Q+1)}}{x^{2}}.
\end{align}
Once again all the boundary terms in the above equation are bounded
thanks to the regularity condition~(\ref{eq:kappa-reg-ends}). Furthermore,
as with Eq.~(\ref{eq:q-zero}), we note, 
\begin{equation}
\bigg|\int_{1}^{\infty}\sin(\Lambda x)\frac{\kappa_{x,\,\alpha}^{(2q)}}{x^{2}}\bigg|\le\mathcal{C}_{\max,\,\alpha}^{(2q)}\bigg|\int_{1}^{\infty}\frac{1}{x^{2}}\bigg|<\infty,
\end{equation}
\begin{equation}
\bigg|\int_{1}^{\infty}\cos(\Lambda x)\frac{\kappa_{x,\,\alpha}^{(2q+1)}}{x^{2}}\bigg|\le\mathcal{C}_{\max,\,\alpha}^{(2q+1)}\bigg|\int_{1}^{\infty}\frac{1}{x^{2}}\bigg|<\infty,
\end{equation}
where $q=1,2,\,\cdots,\,Q$ and 
\begin{equation}
\mathcal{C}_{\max,\,\alpha}^{(q)}\equiv\max_{x\in[1,\,\infty]}\big|\kappa_{x,\,\alpha}^{(q)}\big|.
\end{equation}
Thus the convergence of $\int_{1}^{\infty}\kappa_{x,\,\alpha}\cos(\Lambda x)/x]dx$
depends on $\int_{1}^{\infty}\sin(\Lambda x)\kappa_{x,\,\alpha}^{(2Q+1)}/x$.
We use 
\begin{equation}
\bigg|\int_{1}^{\infty}\cos(\Lambda x)\frac{\kappa_{x,\,\alpha}^{(2Q+1)}}{x^{2}}\bigg|\le\bigg|\int_{1}^{\infty}\kappa_{x,\,\alpha}^{(2Q+1)}\bigg|<\infty
\end{equation}
according to the regularity condition~(\ref{eq:kappa-regularity}).
We conclude that the integral $\int_{1}^{\infty}\kappa_{x,\,\alpha}\cos(\Lambda x)/x]dx$
is convergent.

According to Theorem~\ref{thm:General-d}, we find $f_{\alpha}(k)\sim\mathscr{F}_{\alpha}(k)\equiv\int_{1}^{N/2-1}\sin(kx)\kappa_{x,\,\alpha}dx$.
Furthermore, according to Corollary~\ref{cor:sing-at-zero}, the
only possible singularity of $\mathscr{F}_{\alpha}(k)$ is near $k=0$.
Using trick in Eq.~(\ref{eq:int-scrF}), one can obtain some information
about the behavior of $\mathscr{F}_{\alpha}(k)$ around $k=0$ by
investigating the scaling of $\int_{1/N}^{\Lambda}\mathscr{F}_{\alpha}(k)dk$
with $\Lambda$ being any finite number. As we have shown above, the
second term on the r.h.s. of Eq.~(\ref{eq:int-scrF}) is convergent,
the scaling of $\int_{1/N}^{\Lambda}\mathscr{F}_{\alpha}(k)dk$ with
respect to $N$ is the same as the one of $\int_{1}^{N}\kappa_{x,\,\alpha}/xdx$.
An immediate consequence is that the singularity of $\mathscr{F}_{\alpha}(k)$
is at most $\mathcal{O}\left(1/k\right)$ since $\int_{1}^{N}(\kappa_{x,\,\alpha}/x)dx\le\mathcal{C}_{\max,\,\alpha}\int_{1}^{N}1/xdx\sim\ln N$,
where $\mathcal{C}_{\max,\,\alpha}=\max_{x\in[1,\,\infty]}\kappa_{x,\,\alpha}$.

\section{\label{sec:Finite-size-scaling}Finite-size scaling}

We note that Eq.~(\ref{eq:I0-Delta-scaling}) in the main text gives
the asymptotic scaling of $I_{0}(\Delta)$ in the thermodynamics limit
$N\to\infty$. For $\kappa_{x,\,\alpha}=x^{-\alpha}$, super-HS transition
only occurs at $\alpha=0$ for $N\to\infty$. However, for large but
finite $N$, small $\alpha$ near zero may also lead the super-HS,
which we now discuss. Setting $\alpha=\epsilon$, where $\epsilon$
is a small number, we obtain 
\begin{align}
\int_{1}^{N}\frac{dx}{x^{1+\epsilon}} & =\int_{1}^{N}dx\frac{1}{x}e^{-\ln x\epsilon}\nonumber \\
 & =\int_{1}^{N}dx\frac{1}{x}\left[\sum_{n=0}^{\infty}\frac{(-1)^{n}\epsilon^{n}(\ln x)^{n}}{n!}\right]\nonumber \\
 & =\sum_{n=0}^{\infty}\frac{(-1)^{n}\epsilon^{n}}{n!}\int_{1}^{N}dx\frac{(\ln x)^{n}}{x}=\mathcal{S}(\epsilon\ln N)\ln N,
\end{align}
where 
\begin{equation}
\mathcal{S}(a)\equiv\sum_{n=0}^{\infty}\frac{(-1)^{n}a^{n}}{(n+1)!}.
\end{equation}
Therefore we find when 
\begin{equation}
\epsilon\ln N\ll1
\end{equation}
$\mathcal{S}(\epsilon\ln N)\to1$, so that 
\begin{equation}
\int_{1}^{N}\frac{dx}{x^{1+\epsilon}}\sim\ln N.\label{eq:finiteN}
\end{equation}
Alternatively, $\mathcal{S}(a)$ may be evaluated exactly, which is
\begin{equation}
\mathcal{S}(a)=-\frac{1}{a}\sum_{n=1}^{\infty}\frac{(-1)^{n}a^{n}}{n!}=\frac{1}{a}(1-e^{-a}).
\end{equation}
From which one can clearly see that $\mathcal{S}(\epsilon\ln N)\to1$
as $\epsilon\ln N\to0$. Therefore, according to Eq.~(\ref{eq:I0-Delta-scaling})
in the main text, we see that for $\kappa_{x,\,\epsilon}=x^{-\epsilon}$,
we have 
\begin{equation}
I_{0}(\Delta)\sim N^{2}(\ln N)^{2},\,\text{for}\,\epsilon\ll(\ln N)^{-1}.
\end{equation}
By similar analysis, one can show analogously that for $\kappa_{x,\,1+\epsilon}=(1+\ln x)^{-(1+\epsilon)}$
\begin{equation}
I_{0}(\Delta)\sim N^{2}(\ln\ln N)^{2},\,\text{for}\,\epsilon\ll(\ln\ln N)^{-1}.
\end{equation}

\section{\label{sec:spin-rep}The LRK Hamiltonian in the spin representation}

With the Jordan-Wigner transformation~\citep{coleman2015introduction},

\begin{equation}
a_{j}^{\dagger}=(-1)^{j-1}\prod_{k=1}^{j-1}\sigma_{k}^{z}\sigma_{j}^{+},\label{eq:JW1}
\end{equation}
\begin{equation}
a_{j}=(-1)^{j-1}\prod_{k=1}^{j-1}\sigma_{k}^{z}\sigma_{j}^{-},\label{eq:JW2}
\end{equation}
where $\sigma_{j}^{z}$ is the standard Pauli $z$-matrix 
\begin{align}
\sigma_{j}^{+} & \equiv\begin{bmatrix}0 & 1\\
0 & 0
\end{bmatrix},\\
\sigma_{j}^{-} & \equiv\begin{bmatrix}0 & 0\\
1 & 0
\end{bmatrix},
\end{align}
it is readily checked that 
\begin{align}
a_{j}^{\dagger}a_{j} & =\sigma_{j}^{+}\sigma_{j}^{-}=\frac{1}{2}(\sigma_{j}^{z}+1),
\end{align}
\begin{align}
a_{j+1}^{\dagger}a_{j} & =-\sigma_{j+1}^{+}\sigma_{j}^{z}\sigma_{j}^{-}=\sigma_{j+1}^{+}\sigma_{j}^{-},
\end{align}
where we have used the fact $\sigma_{j}^{z}\sigma_{j}^{\pm}=\pm\sigma_{j}^{\pm}$
in the second equation. Furthermore, 
\begin{align}
a_{j}a_{j+l} & =(-1)^{j-1}\prod_{k=1}^{j-1}\sigma_{k}^{z}\sigma_{j}^{-}\times(-1)^{j+l-1}\prod_{m=1}^{j-1+l}\sigma_{m}^{z}\sigma_{j+l}^{-}\nonumber \\
 & =(-1)^{l}\sigma_{j}^{-}\sigma_{j}^{z}\prod_{k=j+1}^{j-1+l}\sigma_{k}^{z}\sigma_{j+l}^{-}\nonumber \\
 & =(-1)^{l}\sigma_{j}^{-}\prod_{k=j+1}^{j-1+l}\sigma_{k}^{z}\sigma_{j+l}^{-},
\end{align}
where we have used the fact that $\sigma_{j}^{\pm}\sigma_{j}^{z}=\mp\sigma_{j}^{\pm}$.
Now using the relation 
\begin{align}
\sigma_{j}^{+} & \equiv\frac{1}{2}(\sigma_{j}^{x}+\text{i}\sigma_{j}^{y}),\\
\sigma_{j}^{-} & \equiv\frac{1}{2}(\sigma_{j}^{x}-\text{i}\sigma_{j}^{y}),
\end{align}
where $\sigma_{j}^{x}$ and $\sigma_{j}^{y}$ are standard Pauli $x-$
and $y-$ matrices respectively, we find \begin{widetext}
\begin{align}
\sigma_{j}^{+}\sigma_{j+l}^{-}+\sigma_{j}^{-}\sigma_{j+l}^{+} & =\frac{1}{4}(\sigma_{j}^{x}+\text{i}\sigma_{j}^{y})(\sigma_{j+l}^{x}-\text{i}\sigma_{j+l}^{y})+\frac{1}{4}(\sigma_{j}^{x}-\text{i}\sigma_{j}^{y})(\sigma_{j+l}^{x}+\text{i}\sigma_{j+l}^{y})\nonumber \\
 & =\frac{1}{2}(\sigma_{j}^{x}\sigma_{j+l}^{x}+\sigma_{j}^{y}\sigma_{j+l}^{y}),\label{eq:Pauli-Id1}
\end{align}
\begin{align}
\sigma_{j}^{+}\sigma_{j+l}^{+}+\sigma_{j}^{-}\sigma_{j+l}^{-} & =\frac{1}{4}(\sigma_{j}^{x}+\text{i}\sigma_{j}^{y})(\sigma_{j+l}^{x}+\text{i}\sigma_{j+l}^{y})+\frac{1}{4}(\sigma_{j}^{x}-\text{i}\sigma_{j}^{y})(\sigma_{j+l}^{x}-\text{i}\sigma_{j+l}^{y})\nonumber \\
 & =\frac{\text{1}}{2}(\sigma_{j}^{x}\sigma_{j+l}^{x}-\sigma_{j}^{y}\sigma_{j+l}^{y}).\label{eq:Pauli-Id2}
\end{align}
Using Eqs.~(\ref{eq:JW1},~\ref{eq:JW2},~\ref{eq:Pauli-Id1}),
the tunneling and kinetic terms become 
\begin{equation}
\sum_{j=1}^{N}(a_{j}^{\dagger}a_{j+1}+a_{j+1}^{\dagger}a_{j})=\sum_{j=1}^{N}(\sigma_{j}^{-}\sigma_{j+1}^{+}+\sigma_{j}^{+}\sigma_{j+1}^{-})=\frac{1}{2}\sum_{j=1}^{N}(\sigma_{j}^{x}\sigma_{j+l}^{x}+\sigma_{j}^{y}\sigma_{j+l}^{y}),\label{eq:tunnel-spin}
\end{equation}
and
\begin{equation}
\sum_{j=1}^{N}(a_{j}^{\dagger}a_{j}-\frac{1}{2})=\frac{1}{2}\sum_{j=1}^{N}\sigma_{j}^{z}\label{eq:kinetic-spin},
\end{equation}
respectively. For the long-range superconducting terms, with the anti-periodic
boundary condition, one can easily obtain the following alternative
form 
\begin{equation}
\sum_{j=1}^{N-1}\sum_{l=1}^{N-j}\kappa_{l,\,\alpha}a_{j}a_{j+l}=\frac{1}{2}\sum_{j=1}^{N}\sum_{l=1}^{N-1}\kappa_{l,\,\alpha}a_{j}a_{j+l},
\end{equation}
and and a similar equation for the term $\sum_{j=1}^{N-1}\sum_{l=1}^{N-j}\kappa_{l,\,\alpha}a_{j+l}^{\dagger}a_{j}^{\dagger}$.
On the other hand, with Eqs.~(\ref{eq:JW1},~\ref{eq:JW2},~\ref{eq:Pauli-Id2}),
we find

\begin{equation}
\sum_{j=1}^{N}\sum_{l=1}^{N-1}\kappa_{l,\,\alpha}(a_{j}a_{j+l}+a_{j+l}^{\dagger}a_{j}^{\dagger})=\frac{1}{2}\sum_{j=1}^{N}\sum_{l=1}^{N-1}(-1)^{l}\kappa_{l,\,\alpha}(\sigma_{j}^{x}\sigma_{j+l}^{x}-\sigma_{j}^{y}\sigma_{j+l}^{y})\sigma_{j+1}^{z}\cdots\sigma_{j+l-1}^{z}.\label{eq:superconducting-spin}
\end{equation}
Substituting Eqs.~(\ref{eq:tunnel-spin}-\ref{eq:superconducting-spin})
into Eq.~(\ref{eq:LRK}) in the main text yields the LRK Hamiltonian
in the spin-representation, i.e., Eq.~\eqref{eq:Hspin} in the main
text. \end{widetext}

\bibliographystyle{apsrev4-1}
\bibliography{SuperHS-Kitaev}

\end{document}